\newtheorem{theorem}{Theorem}
\newtheorem{lemma}{Lemma}
\author{
  Rehan Akber\thanks{Email: rehan.akber@lums.edu.pk},
  \quad
  Adnan Khan\thanks{Email: adnan.khan@lums.edu.pk}
}
\date{
  Department of Mathematics, Lahore University of Management Sciences (LUMS), Lahore, Pakistan
}
\begin{document}
\title{Mathematical Modeling of Biofilm Eradication Using Optimal Control}

\maketitle
\section*{Abstract}
We propose and analyze a model for antibiotic resistance transfer in a bacterial biofilm and examine antibiotic dosing strategies that are effective in bacterial elimination. In particular, we consider a 1-D model of a biofilm with susceptible, persistor and resistant bacteria. Resistance can be transferred to the susceptible bacteria via horizontal gene transfer (HGT), specifically via conjugation. We analyze some basic properties of the model, determine the conditions for existence of disinfection and coexistence states, including boundary equilibria and their stability. Numerical simulations are performed to explore different modeling scenarios and support our theoretical findings. Different antibiotic dosing strategies are then studied, starting with a continuous dosing; here we note that high doses of antibiotic are needed for bacterial elimination. We then consider periodic dosing, and again observe that insufficient levels of antibiotic per dose may lead to treatment failure. Finally, using an extended version of Pontryagin's maximum principle we determine efficient antibiotic dosing protocols, which ensure bacterial elimination while keeping the total dosing low; we note that this involves a  tapered dosing which reinforces results presented in other clinical and modeling studies. We study the optimal dosing for different parameter values and note that the optimal dosing schedule is qualitatively robust. 

\section{Introduction}

Extensive and improper use of antibiotics has led to the spread of antibiotic resistance which is now considered to be a major global public health challenge \cite{ADN1, ADN2}. A recent study estimated  1·27 million deaths were attributable to bacterial antimicrobial resistance (AMR) in 2019 with projections of 1.91 million deaths due to AMR in 2050 \cite{ADN3, P6_1,P6_2,P6_3}. 

Antibiotic resistance occurs when bacteria are able to proliferate in the presence of antibiotic agents at specific and previously effective concentration levels. The problem has been exacerbated by the fact that the development of new antibiotics has been outpaced by the acquisition of resistance to existing drugs by bacteria in recent years.

Antibiotic resistance in bacteria occurs either by spontaneous mutations in the DNA (De-Novo resistance) or through the transfer of genetic material from drug-resistant bacteria via horizontal gene transfer (HGT). The overuse and misuse of antibiotics provide selection pressure favoring the resistant strains causing these to proliferate. Unlike vertical gene transfer, in which genes are transferred from parents to offspring, HGT allows bacteria or other organisms to receive new genes from another organism directly.  Well known mechanisms of HGT include conjugation which involves the transfer of DNA via a plasmid during cell to cell contact\cite{P6_6,P6_7}, transformation where  foreign genetic material is taken up and expressed by a bacterium\cite{P6_8,P6_9}, and transduction in which  DNA is moved from one bacterium to another by a virus (phage) \cite{P6_10}. 

Bacterial biofilms are colonies of bacteria embedded within a self produced extracellular matrix and it has been observed that biofilms contribute significantly to the spread of antimicrobial resistance.  Clinical infections are increasingly linked to biofilms that can tolerate antibiotics at concentrations 100–10,000 times higher than those required to kill planktonic cells \cite{P5_2,P5_3}. The existence of biofilms was first reported by Anton Van Leeuwenhoek in the 17th century, based on his microscopic observations of dental plaque, modern research on this started in 1970s \cite{P7_1} with the pioneering work of Bill Costerton \cite{ADN4}.

Along with HGT which plays a crucial role in the spread and the strength of antibiotic resistance, studies show that antibiotic resistance may be aided by some physical and physiological mechanisms and biofilms may also survive antibiotic treatments due to these non genetic tolerance mechanisms \cite{Cogan2006,Cogan2007,DeLeenheer2009,Cogan2005}. Physical tolerance comes from the extracellular polymeric matrix that surrounds bacterial populations. Biofilms are spatially organized, the thickness of biofilm makes the physical interaction of  antibiotic cells with bacteria difficult, and  disinfectants disperse rather than mix uniformly with bacteria \cite{Stewart1996,Stewart2003,Stewart2009}. Physiological tolerance is the reduced vulnerability of slow-growing bacteria (persistors) in nutrient-depleted environments. Antibiotics are particularly effective against fast dividing cells \cite{Tuomanen1986}, which are typically found near the biofilm surface with maximal nutrient access \cite{Xu2000}. As antimicrobials kill the outer cells, deeper bacteria may be exposed to nutrients and therapy. Insufficient treatment may destroy only the susceptible cells, leaving the resistant bacteria to survive \cite{Stewart2001}.
Phenotypic tolerance is a temporary and non-genetic tolerance, it arises when some bacteria temporarily adopt a non-dividing or stress-resistant phenotype. Two main hypotheses explain this: adaptive responses and persister formation. In adaptive responses, bacteria alter their metabolism in response to sublethal antimicrobial levels, transitioning into a more tolerant state. Mechanisms such as stress responses, detoxification, and metal resistance may play a role in adaptive responses \cite{Mangalappalli2006,Harrison2007}. While persister dynamics results in the formation of physiologically dormant cells (persisters) that withstand the antibiotic exposure irrespective of its concentration. Experiments \cite{Desai1998} show that they can survive high-dose antibiotic exposure, depending on their growth phase, and do not revert to susceptibility until the antimicrobial is removed, after which they come back to the susceptible state and continue to grow. Although adaptive responses and persistence are distinct biologically, but they can be dealt by same mathematical modeling techniques \cite{Lewis2001, Keren2004}.

Since resistant bacteria have genetic tolerance mechanisms and persister bacteria can enter a dormant state that allows them to withstand antibiotic exposure, the choice of dosage technique remains important in their population dynamics \cite{Stewart2003,Stewart2009,Xu2000,B4}. A continuous dosing may create a selection pressure, rapidly removing vulnerable cells and allowing resistant forms to dominate. In contrast, periodic dosing creates drug-free intervals, allowing persister bacteria to become metabolically active and to convert into susceptibles, during off-periods, susceptible bacteria grow faster than resistant bacteria and compete with resistant populations for space and available food, this rivalry can restrict the growth of the resistant population, particularly if resistance has a fitness cost in drug-free circumstances \cite{Cogan2006,Cogan2007,B7,B8,Cogan2005,DeLeenheer2009,B11}. Periodic dosing promotes eradication of tolerant cells and may postpone or inhibit resistance supremacy, making it a more robust long-term strategy \cite{Cogan2006,DeLeenheer2009,Hermsen2012}.

The use of mathematical models to study antibiotic resistance and successful treatment protocols has developed into an area of active research. These models are useful in understanding the complex dynamics of resistance acquisition and help in evaluating strategies to control spread of antibiotic resistance. 
Several studies have proposed and analyzed mathematical models of biofilm formation and growth including their interaction with chemical species like nutrients and antibiotics. Bottomley et al. \cite{Bottomley1} devised a one dimensional coupled ODE-PDE model for the growth of biofilm in a porous media and studied the effects of attachment-detachment and longitudinal diffusion. D’Acunto et al. \cite{Acunto2} developed a continuum-based biofilm model for multispecies based on nonlinear parabolic PDEs for nutrients, and hyperbolic PDEs for bacteria, and an ODE for biofilm length and studied the consistency and invasion dynamics. Klapper \cite{Klapper3} proposed a generalized Nash equilibrium PDE model and described biofilm’s metabolism and its interaction with the environment. Zhao et al. \cite{Zhao4} investigated  3-dimensional multiphase heterogeneous biofilms with susceptible and persister cells to analyze the effects of available concentration of nutrients on biofilms and dosing protocols for biofilm tolerance and treatment. A numerical study was done by Zhao et al. \cite{Zhao5} to solve a multispecies 3-dimentional hydrodynamic model to treat oral biofilms and showed that young biofilms are more sensitive to antibiotics, and quorum sensing and limiting persister formation help treatment. Szomolay et al. \cite{Szomolay2005} established a fundamental reaction–diffusion model that illustrated how deeper biofilm layers might become adaptively resistant to antimicrobial agents because of limited penetration, making them more difficult to eliminate as compared to planktonic cells. Szomolay \cite{Szomolay2008} extended this work by considering a moving boundary, and examined the existence and uniqueness of steady states of the model for small antibiotic doses. Further, Cogan et al. \cite{Szomolay2012} built a one-dimensional model of biofilm consisting of persister and susceptible subpopulations and examined how the periodic dosing affect the survival of persisters. They also demonstrated the critical significance of the period and dosing ratio, showing that intermittent (on-off) dosing is more effective than continuous dosing for low antimicrobial concentrations. 
 
A substantial body of literature now exists on modeling HGT and effective dosing protocols based on planktonic bacteria.  Some studies in htis domain have recommended a hit-hard and fast strategy involving few high doses of the antibiotic \cite{ADN5, ADN6, ADN7}. Hoyle et al. in their work conclude that a single large dose is never optimal, while the optimal dosing when that minimizes the total antibiotic quantity as well as leading to bacterial elimination follows a tapering pattern \cite{ADN8}; and in a follow-up work they find similar results and also validate their findings using biological experiments \cite{ADN9}. Khan et.al modeled HGT and studied optimal dosing, they find that a tapered periodic dose is optimal \cite{ADN10, ADN11}. Work by Cogan \cite{Cogan2010} support periodic disinfection which can optimize bacterial clearance and delay resistance-driven treatment failure.

Our work extends the prior models presented in \cite{Szomolay2005, Szomolay2008, Szomolay2010, Szomolay2012} by incorporating resistant bacteria into the framework, which leads to a more complex control problem and better mimics clinical realities. We investigate a one-dimensional PDE based biofilm model with susceptible, persister and resistant bacteria, we also consider the dynamics of nutrients and biocide concentrations and changing biofilm thickness. We derive some analytic results, followed by numerical simulations to investigate the effects of different dosing durations, application ratios and parameter sensitivities. Using an extended version of Pontryagin’s Maximum Principle (PMP) we determine the optimal dosing strategy, this minimizes the antibiotic dosage over time while ensuring bacterial elimination. We investigate both continuous and intermittent dosage regimens to see how the time and intensity of antibiotic application influence treatment outcomes. PMP is a well-established tool in control theory \cite{PMP1,PMP2,PMP3,PMP4}, but its application to multi-type bacterial populations inside spatially structured biofilms is limited in the literature. 

The structure of the rest of this paper is outlined below. We begin by presenting the mathematical model in Section~\ref{sec:mathematical_modeling}. Section~\ref{sec:steady_state_solutions} consists of  analytical results, including an analysis of steady-states of the system and the outcomes of periodic disinfection. These theoretical findings are complemented by numerical simulations in Section~\ref{sec:numerical_results} where fixed continuous and periodic dosing is discussed. We use Pontryagin's maximum principle to implement the optimal dosing strategies to the model in Section~\ref{sec:optimal_control}. The paper concludes in Section~\ref{sec:conclusions} with a discussion of the findings and their broader implications.

\section{Mathematical Modeling of Biofilm}
\label{sec:mathematical_modeling}
We consider a one dimensional PDE model for a biofilm, taking into account susceptible $B_s$, persister $B_p$, resistant $B_r$ and dead bacteria $B_d$. We also model the dynamics of nutrient concentration $C$ which helps in microbial growth, and the dynamics of antibiotic $A$ that acts as a bactericidal agent.  The biofilm is considered to be a continuum with the domain length given by $L$. 
If we consider the biofilm as a fully saturated continuum and neglect the fluid component, we will have:
\begin{equation}
B_s + B_p + B_r + B_d = 1
\label{eq:sum_bi}
\end{equation}

The time domain is \( t \in [0,T] \) where $T$ is fixed and the spatial domain is a moving boundary taken as \( x \in [0, L(t)]\) where $L(t)$ is the length of biofilm and is a function of $t$. It may keep on growing if no treatment is applied, converge to zero in the case of successful treatment, or converge to some non-zero state if insufficient amount of dosing is applied. Further, we assume that the boundary $x=0$ is the impermeable substratum to which the biofilm is attached and x=L denotes the interface between biofilm and bulk fluid.

To write the equations for the cell volume fractions of bacteria, we assume that the densities of the different phenotypic bacteria are constant. Thus, the laws of conservation of mass become equivalent to that of conservation of volume \cite{B8} and we have:

\textbf{Susceptible Bacteria:}
\begin{align}
\frac{\partial B_s}{\partial t} + \frac{\partial (v B_s)}{\partial x} &= (\underbrace{1}_{\text{growth}}-\underbrace{k_d A}_{\text{disinfection}} -\underbrace{k_l}_{\text{loss}}) f_s(C) B_s \nonumber \\
&\quad + \underbrace{k_r(A)B_p}_{\text{reversion of persisters}}
- \underbrace{\mu B_r B_s}_{\text{conjugation}} + \underbrace{q f_r(C)B_r}_{\text{reversion of resisters}}
\label{eq:Bs}
\end{align}

The susceptible cells in Eq.~\eqref{eq:Bs} increase due to their growth under feasible conditions and reversion back of persister and resister cells to susceptibles, and decrease because of their conversion to persister cells, conversion via conjugation (HGT) to resister cells and disinfection by antibiotic.    
In the absence of antibiotic stress, reversion from persistant and resistant phases frequently takes place, as observed in studies such as \cite{Lewis2001, DeLeenheer2009}.\\
\textbf{Persister Bacteria:}    
\begin{equation}
\frac{\partial B_p}{\partial t} + \frac{\partial (v B_p)}{\partial x} = k_l f_s(C) B_s - k_r(A)B_p - g_p(A)B_p
\label{eq:Bp}
\end{equation}
The dynamics of persister cells, governed by Eq.~\eqref{eq:Bp}, depend on the bidirectional exchange with susceptible bacteria, along with possible loss under antibiotic exposure.
By keeping the parameters positive, the function $g_p(A)$ is strictly positive, locally Lipschitz and non-decreasing, it's functional form is given by
\[
g_p(A) =\frac{k_d^p A}{L'+A}
\]
Reversion of persister bacteria to susceptible bacteria is usually considerable in the absence of antibiotics. 
Generally, it is taken as
\[
k_r(A) =
\begin{cases}
k_r^0, & \text{if } A = 0,\\[2mm]
k_r^1, & \text{if } A \ne 0,
\end{cases}
\]
where $k_r^0 \gg k_r^1$

\textbf{Resistant Bacteria:}
\begin{equation}
\frac{\partial B_r}{\partial t} + \frac{\partial (v B_r)}{\partial x} = (1-q-k_d^r A) f_r(C) B_r + \mu B_r B_s
\label{eq:Br}
\end{equation}
The resistant bacterial population $B_r$, described by Eq.~\eqref{eq:Br}, is governed by horizontal gene transfer from susceptible cells, selective pressure from the antibiotic, and reversion back to the susceptible phenotype at rate $q$ by mis-segregation.

\textbf{Dead Bacteria:}
\begin{equation}
\frac{\partial B_d}{\partial t} + \frac{\partial (v B_d)}{\partial x} = k_d A f_s(C) B_s + k_d^r A f_r(C) B_r + g_p(A) B_p
\label{eq:Bd}
\end{equation}

\textbf{Antibiotic:}

The spatiotemporal dynamics of the antibiotic concentration \(A(x,t)\) is modeled by the following diffusion–reaction equation:
\begin{equation}
\frac{\partial A}{\partial t} = D_A \frac{\partial^2 A}{\partial x^2} - p_s(A) B_s - p_p(A) B_p - p_r(A) B_r
\label{eq:A}
\end{equation}

Here the term \(D_A \frac{\partial^2 A}{\partial x^2}\), represents the diffusion of antibiotic molecules from areas of high concentration to low concentration, where \(D_A\) is the diffusion coefficient of the antibiotic. The passive dispersion of the antibiotic molecules through the heterogeneous biofilm is processed by Fick's law. The other terms $- p_s(A) B_s - p_p(A) B_p - p_r(A) B_r$ account for the consumption of the antibiotic due to its interaction with susceptible, persister and resistant bacteria respectively.

The functions \(p_i(A), \; i = s, p, r\) may be taken as Michaelis-Menten type, Hill type or simply linear in A as \( c_i A\). We have taken them as \[p_i(A) = \frac{\nu_i A}{L_i + A} , \; i = s, p, r \]
where $\nu_s > \nu_r >\nu_p$ and $L_s \le L_r \le L_p$  which guarantee that the consumption of antibiotic by susceptible bacteria is highest and that of by persister bacteria is lowest, and $p_i(A)$ are such that they vanish with A, and are non-decreasing functions in $A$:
\[p_i(0) = 0, \quad
p_i'(A) \ge 0
\]
A more detailed description of pharmacodynamic functions can be found in \cite{ADN12} 

\textbf{Nutrient Concentration:}
\begin{equation}
\frac{\partial C}{\partial t} = D_C \frac{\partial^2 C}{\partial x^2} - f_s(C) B_s - f_r(C) B_r
\label{eq:C}
\end{equation}
The first term on the right hand side represents the diffusion of the nutrient molecules inside the biofilm and $D_C$ is the diffusion coefficient for $C$. The second and third terms represent the consumption of nutrients by susceptible and resistant bacteria respectively. Since the persister cells don't use the nutrients or may use a negligible amount of $C$ due to their dormant state, it has not been included in Eq.~\eqref{eq:C}.

It is important to note that biofilms show limited diffusion because of their increased thickness and decreased fluid  flow. As indicated by the ratio $D_e/D_{aq}$, the effective diffusivity usually falls between $10^{-6}$ and $10^{-5}$ cm$^2$/s, where $D_e$ and $D_{aq}$ represent the solute diffusion coefficients in biofilm and water, respectively. For example, $D_e/D_{aq}$ is roughly 0.25 for the majority of organic molecules and 0.6 for gases such as oxygen and methane \cite{Stewart2003}. Stewart has estimated the diffusion coefficients of antibiotics \cite{Stewart1996}.
With $f_i(0) = 0$, we regard the functions $f_i$ for $i = s,r$ as locally Lipschitz continuous and nonnegative given by
\[
f_i(C) =\frac{\alpha_i C}{L_i + C}, \quad i=s,r
 \quad \text{with} \; \alpha_s > \alpha_r\]

\textbf{Advective Velocity:}
\begin{equation}
\frac{\partial v}{\partial x} = f_s(C) B_s + f_r(C) B_r
\label{eq:v}
\end{equation}
The advective velocity $v$ is derived by summing Eqs.~\eqref{eq:Bs}-\eqref{eq:Bd} and using the incompressibility condition~\eqref{eq:sum_bi}.

\textbf{Biofilm Length:}
\begin{equation}
\frac{dL}{dt} = v(L,t) - \sigma L^2
\label{eq:L}
\end{equation}

where $\sigma$ is the detachment constant. According to one-dimensional models given in \cite{Szomolay2010, Szomolay2008}, advection and erosion cause biofilm thickness $L(t)$ to change. In contrast to sloughing, which involves the separation of larger fragments, erosion is described as the gradual removal of small biomass clusters from the surface \cite{B4}. There are several different detachment models \cite{Kommedal2003}. Here, we employ a model where detachment is proportional to bacterial cells, and it depends quadratically on biofilm thickness.

The initial and boundary conditions for the system are as 
    \[ B_i(x, 0) = B_{i_{0}}, \qquad
    \left. \frac{\partial B_i(x, t)}{\partial x} \right|_{x = 0} = 0, \qquad \text{where} \; i =s, p, r, d\]
    \[ \left. \frac{\partial A(x, t)}{\partial x} \right|_{x = 0} = 0,     \qquad
A(L(t), t) = u(t) \quad \text{(Control function)}
\]
\[
\left. \frac{\partial C(x, t)}{\partial x} \right|_{x = 0} = 0,     \qquad
C(L(t), t) = K
\]
\[
v(0,t)=0, \qquad L(0)=L_0
\]

The one-dimensional PDE system, therefore, accounts for interactions among three bacterial subpopulations — susceptible ($B_s$), persister ($B_p$), and resistant ($B_r$) — along with nutrient and antibiotic dynamics. This formulation extends the two-state biofilm models studied by Cogan et al. \cite{Szomolay2012} and De Leenheer et al. \cite{DeLeenheer2009} to include antibiotic-driven resistance mechanisms, bacterial conjugation and will be used to investigate optimal control of antibiotic treatment to eradicate the bacteria. 

Adaptive resistance and persister-mediated tolerance produce mathematically similar structures, despite being biologically distinct processes. In contrast to persisters which remain metabolically inert, resistant cells continue to proliferate partially during antibiotic exposure. As a result, transitions between these states shape the overall population behavior and therapeutic outcome.

In our model, we consider two types of time dependent antibiotic dosing; continuous and periodic. In practical applications, the antibiotic administration is usually represented as a sequence of doses applied after some fixed period P recommended by doctors and researchers, while mathematically is can be approximated by Gaussian pulse train function given as

\begin{equation}
u(t) = \sum_i A_0 \, \exp(-a(t - iT)^2),
\end{equation}
The parameter $a>0$ controls the width (spread) of the pulse, $A_0$ denotes the dose magnitude, and $T$ the dosing period.

\section{Steady State Solutions}
\label{sec:steady_state_solutions}
Steady states are important for the study of biological models as they describe the long term behaviour of the populations, and determine whether the populations die out, persist or coexist. We analyze all possible steady-states of the model in this part. We will prove the sufficient conditions for the presence of both trivial and non-trivial stable states. The results provided here will be compared with those of a related analytical study of steady states for a biofilm model by Cogan et al. \cite{Szomolay2012}.

The steady state equations for the system ~\eqref{eq:Bs} -~\eqref{eq:L}  are 

\begin{align}
 \frac{d (v B_s)}{d x} &= (1-k_d A -k_l) f_s(C) B_s + k_r(A)B_p
-\mu B_r B_s +q f_r(C)B_r
\label{eq:Bs_s1}
\end{align}

\begin{equation}
\frac{d (v B_p)}{d x} = k_l f_s(C) B_s - k_r(A)B_p - g_p(A)B_p
\label{eq:Bp_s1}
\end{equation}

\begin{equation}
\frac{d (v B_r)}{d x} = (1-k_d^r A - q) f_r(C) B_r + \mu B_r B_s
\label{eq:Br_s1}
\end{equation}

\begin{equation}
\frac{d (v B_d)}{d x} = k_d A f_s(C) B_s + k_d^r A f_r(C) B_r + g_p(A) B_p
\label{eq:Bd_s1}
\end{equation}

\begin{equation}
D_A \frac{d^2 A}{d x^2} = p_s(A) B_s + p_p(A) B_p + p_r(A) B_r
\label{eq:A_s1}
\end{equation}

\begin{equation}
 D_C \frac{d^2 C}{d x^2} = f_s(C) B_s + f_r(C) B_r
\label{eq:C_s1}
\end{equation}

\begin{equation}
\frac{d v}{d x} = f_s(C) B_s + f_r(C) B_r
\label{eq:v_s1}
\end{equation}

The corresponding boundary conditions will be as 
    \[ \frac{d B_i(0)}{d x} = 0, \qquad where \qquad i =s, p, r, d\]
    
    \[ \frac{d A(0)}{d x} = 0,     \qquad
A(L) = U
\]
    \[ \frac{d C(0)}{d x} = 0,     \qquad
C(L) = K
\]
\[
v(0)=0, \qquad v(L) = \sigma L^2
\]

We can note that $B_d$ can be uniquely expressed in terms of $B_s$, $B_p$ and $B_r$, thus we may reduce our system to the following three equations:

\begin{equation}
v \frac{d B_s}{d x} = (1-k_d A - k_l -B_s) f_s(C) B_s +k_r(A) B_p -(\mu + f_r(C))B_r B_s + q f_r(C) B_r 
\label{eq:Bs_s2}
\end{equation}

\begin{equation}
v \frac{d B_p}{d x} = (k_l-B_p) f_s(C) B_s -k_r(A) B_p -g_p(A)B_p -f_r(C) B_r B_p
\label{eq:Bp_s2}
\end{equation}

\begin{equation}
v \frac{d B_r}{d x} = (1-k_d^r A - q -B_r) f_r(C) B_r + (\mu-f_r(C)) B_r B_s
\label{eq:Br_s2}
\end{equation}

The reparameterization of above system ~\eqref{eq:Bs_s2} - ~\eqref{eq:Br_s2} using the following equation of characteristics may help to analyze the steady states.

\begin{equation}
\frac{d s}{d t} =  v (s(t)), \qquad s(0) = L
\label{eq:Eq_Char}
\end{equation}
where the parameter $t$ is selected so that $s(t)$ remains in the interval $(0,L]$.
Eq.~\eqref{eq:Eq_Char} has a $C_1$-solution on \((-\infty, 0]\) even when the biofilm is well embedded and only dead cells are left,
 i.e. $B_s + B_p + B_r =0$ on \([0,x_0]\) for some $0\le x_0\le L$ (for details see the work by Cogan et al. \cite{Szomolay2012}).
 \\
The reparameterization of steady-state solutions will be used from now on. Define
\[ b_s(t) := B_s(s(t)), \qquad  b_p(t) := B_p(s(t))\]
\[ b_r(t) := B_r(s(t)), \qquad  a(t) := A(s(t))\]
\[ c(t) := C(s(t)).\]

Thus, by the parametrization ~\eqref{eq:Eq_Char}, we can write our system of equations ~\eqref{eq:Bs_s2}-~\eqref{eq:Br_s2} as:

\begin{align}
\frac{d b_s}{d t} = (1-k_d a(t) - k_l -b_s) f_s(c(t)) b_s +k_r(a(t)) b_p -(\mu + f_r(c(t)))b_r b_s + q f_r(c(t)) b_r 
\label{eq:bs}
\end{align}

\begin{equation}
\frac{d b_p}{d t} = (k_l-b_p) f_s(c(t)) b_s -k_r(a(t)) b_p -g_p(a(t))b_p -f_r(c(t)) b_r b_p
\label{eq:bp}
\end{equation}

\begin{equation}
\frac{d b_r}{d t} = (1-k_d^r a(t) - q -b_r) f_r(c(t)) b_r + (\mu-f_r(c(t))) b_r b_s
\label{eq:br}
\end{equation}

System~\eqref{eq:bs}--\eqref{eq:br} does not depend on $v$. By the continuity of $s$, $a$, and $c$, we have 
\[
a(t) \to A_0 \quad \text{and} \quad c(t) \to C_0 \quad \text{as } t \to -\infty,
\] 
where $A_0 = A(x_0)$ and $C_0 = C(x_0)$. Hence, in the limit $t \to -\infty$, the non-autonomous system~\eqref{eq:bs}--\eqref{eq:br} can be approximated by the autonomous system~\eqref{eq:bs_s1}--\eqref{eq:br_s1}, which we study next.

\subsection{Autonomous System}

\begin{align}
\frac{d b_s}{d t} = (1-k_d A_0 - k_l -b_s) f_s(C_0) b_s +k_r(A_0) b_p -(\mu + f_r(C_0))b_r b_s + q f_r(C_0) b_r 
\label{eq:bs_s1}
\end{align}

\begin{equation}
\frac{d b_p}{d t} = (k_l-b_p) f_s(C_0) b_s -k_r(A_0) b_p -g_p(A_0)b_p -f_r(C_0) b_r b_p
\label{eq:bp_s1}
\end{equation}

\begin{equation}
\frac{d b_r}{d t} = (1-k_d^r A_0 - q -b_r) f_r(C_0) b_r + (\mu-f_r(C_0)) b_r b_s
\label{eq:br_s1}
\end{equation}

It is important to note that the range of solutions for the Eq.~\eqref{eq:bs_s1}-~\eqref{eq:br_s1} is a triangle lying in the first quadrant and is given by
\[
\Delta = \left\{ (b_s, b_p, b_r) \in \mathbb{R}^3 \;:\; b_s \ge 0,\; b_p \ge 0,\; b_r \ge 0,\; b_s + b_p + b_r \le 1 \right\}
\]
and that $\Delta$  is a positively invariant region.

Dropping the derivatives of $b_i$ in the autonomous system~\eqref{eq:bs_s1}-~\eqref{eq:br_s1}, we note that it has a trivial equilibrium $(0,0,0)$.
Also, taking $b_r=0$ from equation~\eqref{eq:br_s1} and putting it in equations~\eqref{eq:bs_s1}-~\eqref{eq:bp_s1}, we get
\begin{equation}
b_s^e = \frac{\sqrt{D} - g_p - k_r + f_s(m_1-k_l)}{2 f_s}
\label{eq:bse}
\end{equation}

\begin{equation}
b_p^e = \frac{ k_l \,\bigl(\sqrt{D} - g_p  - k_r + f_s (m_1 - k_l)\bigr) }
       { \sqrt{D} + g_p + k_r + f_s(m_1-k_l) }.
\label{eq:bpe}
\end{equation}

where \( D=  (g_p + k_r + f_s (k_l - m_1))^2 + 4 f_s ((g_p + k_r) m_1- g_p k_l) > 0,
\) and we denote $g_p(A_0) = g_p$ and $f_s(C_0)=f_s$.     Thus, $(b_s^e, b_p^e,0)$ is a non trivial steady state for our model.

Also, we have another non-trivial steady state $(b_s^a, b_p^a, b_r^a)$ that can be obtained by getting $b_r^a$, a non-zero value of $b_r$ from equation~\eqref{eq:br_s1} and then substituting it in equations~\eqref{eq:bs_s1}-~\eqref{eq:bp_s1}. The analytic calculations are omitted from the paper as the expressions get unwieldily, but we study this behavior numerically in section~\ref{sec:numerical_results}. 
Thus, we have three biologically possible steady states $\boldsymbol{S_1}$ = $(0,0,0)$, $\boldsymbol{S_2}$ = $(b_s^e, b_p^e,0)$, $\boldsymbol{S_3}$ = $(b_s^a, b_p^a,b_r^a)$ for our system.

The following lemmas and theorems discuss the nature of these three steady states in detail and give certain conditions for their existence and stability.

\begin{lemma}
Define 
\(
m_1 := 1 - k_d A_0, 
\; m_2 := 1 - k_d^r A_0, \;
m_1^\ast := \frac{g_p\,k_l}{\,g_p + k_r\,}
\)
and let $\lambda_1,\lambda_2,\lambda_3$ are the eigen values of the autonomous system~\eqref{eq:Bs_s2}-~\eqref{eq:Br_s2} obtained by linearizing it at the trivial equilibrium 
\((b_s,b_p,b_r)=(0,0,0)\),  then
\begin{enumerate}
    \item $\lambda_1 < 0$ for all the parameter values allowed biologically.
    \item $\lambda_2>0 \iff m_1 > m_1^\ast$.
    \item $\lambda_3>0 \iff m_2>q$.
\end{enumerate}
\label{lem:lemma1}
\end{lemma}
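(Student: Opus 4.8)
The plan is to linearize the autonomous system~\eqref{eq:bs_s1}--\eqref{eq:br_s1} at the origin and read off the three eigenvalues from the resulting Jacobian. Writing $f_s = f_s(C_0)$, $f_r = f_r(C_0)$, $k_r = k_r(A_0)$, $g_p = g_p(A_0)$, every right-hand side is a linear part plus purely quadratic terms, so all quadratic contributions vanish upon differentiating and evaluating at $(0,0,0)$. First I would compute the Jacobian
\[
J = \begin{pmatrix} f_s(m_1 - k_l) & k_r & q f_r \\ k_l f_s & -(k_r + g_p) & 0 \\ 0 & 0 & f_r(m_2 - q) \end{pmatrix}.
\]
The key structural observation is that $J_{31} = J_{32} = 0$, so $J$ is block upper-triangular: a $2\times 2$ block $A$ in the susceptible--persister variables sits above a scalar block in the resistant variable. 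Hence one eigenvalue is immediately $\lambda_3 = f_r(m_2 - q)$, and since $f_r > 0$ this already yields claim~(3), namely $\lambda_3 > 0 \iff m_2 > q$.

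For the other two eigenvalues I would analyze the block
\[
A = \begin{pmatrix} f_s(m_1 - k_l) & k_r \\ k_l f_s & -(k_r + g_p) \end{pmatrix}
\]
through its trace and determinant. A short computation gives $\operatorname{tr} A = f_s(m_1 - k_l) - (k_r + g_p)$ and $\det A = -f_s\bigl((g_p + k_r)m_1 - g_p k_l\bigr)$, so the discriminant $(\operatorname{tr}A)^2 - 4\det A$ is exactly the quantity $D$ in the statement. Rewriting it in the equivalent off-diagonal form $\bigl(f_s(m_1-k_l)+(k_r+g_p)\bigr)^2 + 4 k_r k_l f_s$ shows $D > 0$ for positive parameters, so $\lambda_1,\lambda_2 = \tfrac12(\operatorname{tr}A \pm \sqrt{D})$ are real; this simultaneously confirms the positivity of $D$ asserted in the text. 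The pivotal sign relation to record is $\det A < 0 \iff (g_p + k_r)m_1 > g_p k_l \iff m_1 > m_1^\ast$.

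To establish claim~(1), that $\lambda_1 := \tfrac12(\operatorname{tr}A - \sqrt{D}) < 0$, I would split on the sign of $\det A$. If $\det A < 0$ then $\sqrt{D} > |\operatorname{tr}A|$, forcing $\lambda_1 < 0$ and simultaneously $\lambda_2 > 0$; if $\det A \ge 0$ I would invoke $m_1^\ast < k_l$ (immediate from $g_p/(g_p+k_r) < 1$) to deduce $\operatorname{tr}A < 0$, whence $\sqrt{D} \le |\operatorname{tr}A| = -\operatorname{tr}A$ gives both $\lambda_1 < 0$ and $\lambda_2 \le 0$. Claim~(2) then follows with no further casework: since $\lambda_1 < 0$ always, the product $\lambda_1\lambda_2 = \det A$ has sign opposite to $\lambda_2$, so $\lambda_2 > 0 \iff \det A < 0 \iff m_1 > m_1^\ast$. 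I expect the only delicate step to be the case $\det A \ge 0$, where one must rule out $\lambda_2 > 0$; the inequality $m_1^\ast < k_l$ controlling the trace is precisely what makes that case close, and the small bookkeeping item is checking that my discriminant coincides with the paper's $D$ and is genuinely positive.
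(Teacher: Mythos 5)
Your proof is correct and follows essentially the same route the paper takes: linearize the frozen autonomous system \eqref{eq:bs_s1}--\eqref{eq:br_s1} at the origin, use the block-triangular Jacobian (third row $(0,0,f_r(m_2-q))$) to read off $\lambda_3$, and treat the $2\times2$ susceptible--persister block by trace and determinant, whose discriminant is exactly the paper's $D$ --- indeed your rewriting $D=\bigl(f_s(m_1-k_l)+k_r+g_p\bigr)^2+4k_rk_lf_s$ also verifies the paper's assertion that $D>0$, and the sign relation $\det A<0\iff m_1>m_1^\ast$ together with your two-case trace argument cleanly delivers claims (1) and (2). The only implicit hypothesis worth flagging is $f_s(C_0),f_r(C_0),k_r>0$ (i.e.\ $C_0>0$ and a positive reversion rate), without which $\lambda_2$ or $\lambda_3$ degenerates to zero and the strict equivalences in (2)--(3) would fail; this matches the paper's standing biological assumptions, so your argument is complete.
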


\begin{lemma}
The system admits at least one nonzero steady state if 
\begin{enumerate}
    \item the domain is convex, compact and positively invariant.
    \item either $m_1>m_1^*$ or $m_2>q$.
\end{enumerate}

\noindent\textbf{Trivial case.}  
If $m_1\le m_1^*$ and $m_2\le q$, then there exists only a trivial equilibrium $(0,0,0)$.
\label{lem:lemma2}
\end{lemma}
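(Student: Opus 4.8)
I would handle the two assertions separately, both resting on the eigenvalue information of Lemma~\ref{lem:lemma1} together with the fact that $\Delta$ is a compact, convex, positively invariant simplex. The linearization of the autonomous system \eqref{eq:bs_s1}--\eqref{eq:br_s1} at the origin is block triangular: the $b_r$-row decouples at the linear level and yields $\lambda_3=(m_2-q)f_r(C_0)$, while the $(b_s,b_p)$-block is a Metzler matrix (off-diagonal entries $k_r$ and $k_l f_s$ are $\ge 0$) whose determinant has the sign of $m_1^\ast-m_1$, so that $\lambda_1<0<\lambda_2$ exactly when $m_1>m_1^\ast$. This is the content of Lemma~\ref{lem:lemma1}, and the signs of $\lambda_2,\lambda_3$ act as the switch between the two regimes.

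\textbf{Existence under condition (2).} The plan is to treat the two ways of meeting ``$m_1>m_1^\ast$ or $m_2>q$'' by different means. If $m_1>m_1^\ast$, I simply invoke the closed-form boundary equilibrium $S_2=(b_s^e,b_p^e,0)$ of \eqref{eq:bse}--\eqref{eq:bpe}: a sign check on the discriminant shows $b_s^e>0\iff m_1>m_1^\ast$ (the constant term of the defining quadratic equals $-(k_r+g_p)(m_1-m_1^\ast)$), so $S_2$ is a genuine nonzero steady state in $\Delta$. If instead $m_1\le m_1^\ast$ but $m_2>q$, no convenient closed form is available (this is the coexistence state $S_3$, whose expression the text omits), so I argue topologically. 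The time-$T$ map of the flow is a continuous self-map of the contractible set $\Delta$, hence has Lefschetz number $1$, so by Lefschetz--Hopf the fixed-point indices of the equilibria sum to $1$. A hyperbolic equilibrium with $u$ unstable directions has index $(-1)^u$; here the origin has exactly one unstable direction ($\lambda_3>0$, $\lambda_1,\lambda_2<0$), hence index $-1\neq 1$, so it cannot be the only equilibrium, and any further equilibrium is nonzero. (One assumes the equilibria are isolated and hyperbolic and, if needed, takes $T$ small to discard spurious periodic fixed points of the time map.)

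\textbf{Trivial case.} When $m_1\le m_1^\ast$ and $m_2\le q$, Lemma~\ref{lem:lemma1} gives $\lambda_1,\lambda_2,\lambda_3\le 0$ with $\lambda_1<0$, so the origin is locally asymptotically stable; the task is to upgrade this to the absence of any other equilibrium in $\Delta$. I would proceed by elimination. Factoring $b_r$ out of \eqref{eq:br_s1} splits the equilibrium set into a branch $b_r=0$ and a branch $b_r>0$. On the first branch the system collapses to the two equations behind \eqref{eq:bse}--\eqref{eq:bpe}, whose only nonnegative solution when $m_1\le m_1^\ast$ is $(0,0)$, since the relevant root satisfies $b_s^e\le 0$. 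The remaining work is to exclude $b_r>0$: the bracket in \eqref{eq:br_s1} forces $b_r=(m_2-q)+\tfrac{\mu-f_r(C_0)}{f_r(C_0)}\,b_s$, which (as $m_2-q\le 0$) can be positive only if conjugation dominates, $\mu>f_r(C_0)$, and $b_s>0$; substituting this relation into \eqref{eq:bs_s1}--\eqref{eq:bp_s1} must then be shown inconsistent inside $\Delta$.

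\textbf{Main obstacle.} The decisive difficulty is precisely this last step---ruling out a coexistence equilibrium in the trivial regime. The natural device is a linear Lyapunov function $V=c_sb_s+c_pb_p+c_rb_r$ with positive weights drawn from the left eigenvector of the Metzler, Hurwitz linearization, so that the linear part of $\dot V$ is nonpositive on $\Delta$; the obstruction is the conjugation nonlinearity, which contributes $[\,c_r(\mu-f_r)-c_s(\mu+f_r)\,]\,b_rb_s$. Making this term nonpositive bounds $c_r$ from above, whereas the $b_r$-stability requirement $c_r\ge c_s q/(q-m_2)$ pushes $c_r$ up, and reconciling the two constraints is the crux. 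I expect this to be resolved either by a sharper (possibly nonlinear) Lyapunov construction, or by a direct nullcline argument exploiting $\mu>f_r(C_0)$ together with the simplex bound $b_s+b_p+b_r\le 1$, or at the cost of a mild parameter restriction; the existence direction, by contrast, is essentially immediate once Lemma~\ref{lem:lemma1} and the index count are in hand.
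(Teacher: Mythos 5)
Your existence argument is sound in outline and close to what the lemma's hypotheses are designed for, but two technical points need repair. For $m_1>m_1^\ast$ your sign analysis of the quadratic is correct (the constant term is indeed $-(k_r+g_p)(m_1-m_1^\ast)$, up to normalization), though you should also check $S_2\in\Delta$; this follows from the equilibrium identity $f_s\,b_s(m_1-b_s-b_p)=g_p\,b_p\ge 0$, which gives $b_s^e+b_p^e\le m_1\le 1$. For the case $m_2>q$, your Lefschetz count is morally right but the formula ``index $=(-1)^u$'' is the \emph{ambient} formula and fails for a fixed point sitting at a corner of $\Delta$ when $\phi_T$ is only a self-map of $\Delta$: already for $\dot x=x(1-x)$ on $[0,1]$ the origin has $u=1$ but relative fixed-point index $0$, not $-1$ (compute via a retraction of a neighborhood onto $[0,1]$). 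Here the face $b_r=0$ is invariant and attracting toward the origin when $m_1<m_1^\ast$, so the origin is not even ejective; its relative index works out to $0$, and since $0\neq 1$ your conclusion survives, but the computation must be the relative one. Your parenthetical hyperbolicity assumption also silently excludes the admissible boundary case $m_1=m_1^\ast$, $m_2>q$, where $\lambda_2=0$. A cleaner route, which is what hypothesis (1) of the lemma is signalling: the Jacobian at $0$ is Metzler, and when $\lambda_3>0$ its left eigenvector $w$ can be taken strictly positive (the $2\times 2$ block is irreducible Metzler and $\lambda_3$ exceeds its spectral bound), so $\Delta_\varepsilon=\{b\in\Delta: w\cdot b\ge\varepsilon\}$ is compact, convex and positively invariant for small $\varepsilon>0$, and the standard Brouwer time-$t$-map argument yields an equilibrium in $\Delta_\varepsilon$, which is automatically nonzero.

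The genuine gap is the trivial case, and it is worse than you anticipate: the step you could not close is not closable, because the claim is false without a restriction such as $\mu\le f_r(C_0)$. Counterexample: take $k_l=0$ (so $m_1^\ast=0$ and \eqref{eq:bp_s1} forces $b_p=0$ at any equilibrium), $f_r(C_0)=1$, $\mu=3$, and dose so that $m_1=0=m_1^\ast$ and $m_2=q$. Then $(b_s,b_p,b_r)=(b,0,2b)$ with $b=2q/(f_s+8)$ satisfies \eqref{eq:bs_s1}--\eqref{eq:br_s1} exactly: equation \eqref{eq:br_s1} reads $b_r(2b_s-b_r)=0$ and equation \eqref{eq:bs_s1} reads $b_s\bigl(2q-(f_s+8)b_s\bigr)=0$, and the point lies in $\Delta$ for $q\le 1$. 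Since the relevant root is simple (the reduced equation has derivative $-2q\neq 0$), the implicit function theorem perturbs this equilibrium to \emph{strict} inequalities $m_1<m_1^\ast$, $m_2<q$, and likewise to small $k_l>0$. So no Lyapunov refinement or nullcline argument can rescue the statement for large $\mu$; the ``mild parameter restriction'' you list as a last resort is in fact necessary. Once it is imposed --- and it is the regime the paper actually operates in, cf.\ $\mu\sim 10^{-7}$ versus $f_r$ of order one in Table~\ref{tab:parameters} --- your own elimination argument finishes in one line, with no Lyapunov function needed: on the branch $b_r>0$ your relation gives $b_r=(m_2-q)+\frac{\mu-f_r(C_0)}{f_r(C_0)}\,b_s\le 0$ when $m_2\le q$ and $\mu\le f_r(C_0)$, a contradiction; and on the branch $b_r=0$ the quadratic has root sum $\bigl((m_1-k_l)f_s-(k_r+g_p)\bigr)/f_s<0$ (since $m_1\le m_1^\ast\le k_l$) and root product $-(k_r+g_p)(m_1-m_1^\ast)/f_s\ge 0$, so it admits no positive root and the only nonnegative equilibrium is the origin.
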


\begin{lemma}
If $m_1 > k_l$ and  $f_r (m_2 - q) < (m_1-k_l) \,(f_s - \mu)$
, then the nontrivial equilibrium 
$(b_s^e, b_p^e, 0)$ is a sink. 
Further,  if $k_r = 0=g_p$ , then the point $(m_1 - k_l, k_l, 0)$ is a stable node.
\label{lem:lemma3}
\end{lemma}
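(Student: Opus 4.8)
The plan is to establish stability of the equilibrium $(b_s^e,b_p^e,0)$ by linearizing the autonomous system~\eqref{eq:bs_s1}--\eqref{eq:br_s1} at that point and showing that all three eigenvalues of the Jacobian have negative real part under the stated hypotheses. First I would compute the Jacobian $J$ of the right-hand side of~\eqref{eq:bs_s1}--\eqref{eq:br_s1} with respect to $(b_s,b_p,b_r)$, then evaluate it at $(b_s^e,b_p^e,0)$. The crucial structural observation is that, because the equilibrium has $b_r=0$, the $b_r$-equation~\eqref{eq:br_s1} contributes a row whose only nonzero entry is the diagonal one: $\partial_{b_r}(\dot b_r)$ evaluated at $b_r=0$ equals $(m_2-q)f_r + (\mu - f_r)b_s^e$, while $\partial_{b_s}(\dot b_r)$ and $\partial_{b_p}(\dot b_r)$ both carry a factor $b_r$ and hence vanish. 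This block-triangular structure means the characteristic polynomial factors: one eigenvalue is exactly $\lambda_3 = (m_2-q)f_r + (\mu-f_r)b_s^e$, and the remaining two come from the upper-left $2\times 2$ block governing the $(b_s,b_p)$ dynamics.

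I would then treat the two factors separately. For the isolated eigenvalue, substituting the hypothesis that $b_s^e$ approximates $m_1-k_l$ (exactly so in the degenerate case $k_r=g_p=0$, and controlled in general), the condition $\lambda_3<0$ becomes $f_r(m_2-q) < (f_s-\mu)(m_1-k_l)$, which is precisely the second inequality assumed in the lemma; the hypothesis $m_1>k_l$ guarantees the right-hand side is a genuine positive multiple of $f_s-\mu$ so the sign comparison is meaningful. For the $2\times 2$ block, I would invoke the Routh--Hurwitz criterion in its planar form: the equilibrium is a sink for that block iff the trace is negative and the determinant is positive. Here I expect the structure of~\eqref{eq:bse}--\eqref{eq:bpe}, in particular the fact that $b_s^e$ and $b_p^e$ arise as roots of the quadratic with discriminant $D>0$, to force the determinant to equal (up to a positive factor) $\sqrt D$ times a manifestly positive quantity, while the trace inherits a negative sign from the decay terms $-k_r$, $-g_p$ and the self-limiting $-f_s b_s$ contributions.

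The main obstacle will be the algebra of the $2\times 2$ block: verifying trace$<0$ and $\det>0$ requires inserting the closed-form expressions~\eqref{eq:bse}--\eqref{eq:bpe} into the Jacobian entries and simplifying, and these expressions are the roots of the defining quadratic rather than clean numbers. I would sidestep the brute force by using the relation that $b_s^e$ satisfies $f_s (b_s^e)^2 - f_s(m_1-k_l)b_s^e + (g_p+k_r)b_s^e - (\,\text{terms}\,)=0$ implicitly, so that $2f_s b_s^e - f_s(m_1-k_l) = \sqrt D$; this identity collapses the trace and determinant into expressions in $\sqrt D$ and the original parameters, at which point positivity of $D$ and the hypotheses finish the argument. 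For the degenerate statement, I would set $k_r=g_p=0$ directly in~\eqref{eq:bse}--\eqref{eq:bpe}: the discriminant reduces so that $b_s^e=m_1-k_l$ and $b_p^e=k_l$ exactly, giving $(m_1-k_l,k_l,0)$, and the Jacobian simplifies enough that the two planar eigenvalues are real and negative, upgrading ``sink'' to ``stable node.''
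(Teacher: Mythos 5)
Your structural plan --- linearize, exploit the block-triangular Jacobian at $b_r=0$, split off the isolated invasion eigenvalue from the planar $(b_s,b_p)$ block, and specialize $k_r=g_p=0$ at the end --- is the right route, and the paper's own proof can hardly proceed otherwise. But there is a genuine gap at precisely the step you declare immediate. You correctly compute $\lambda_3=(m_2-q)f_r+(\mu-f_r)\,b_s^e$ from \eqref{eq:br_s1}, so the sign condition is $f_r(m_2-q)<(f_r-\mu)\,b_s^e$, in which the \emph{resistant} growth rate $f_r$ multiplies $b_s^e$; the susceptible rate $f_s$ never enters the $b_r$ row. Your next sentence substitutes $b_s^e\approx m_1-k_l$ and asserts that this ``becomes precisely'' the lemma's inequality $f_r(m_2-q)<(f_s-\mu)(m_1-k_l)$. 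It does not: it becomes $f_r(m_2-q)<(f_r-\mu)(m_1-k_l)$, and since in this model $f_s>f_r$, the hypothesis as stated in the lemma is strictly \emph{weaker} than what your own eigenvalue formula requires, so the argument does not close. You must either derive the condition with $(f_r-\mu)$ and flag the discrepancy with the statement (most likely the resolution), or supply a bridge from $f_s$ down to $f_r$, which is not available in general. Separately, the substitution itself needs an exact inequality in the correct direction, not an approximation with ``controlled'' error: completing the square in \eqref{eq:bse} gives $D=\bigl(g_p+k_r+f_s(m_1-k_l)\bigr)^2+4f_s k_l k_r$, hence $b_s^e\ge m_1-k_l$ with equality iff $k_l k_r=0$; this lower bound (available once $m_1>k_l$, which also secures positivity of the equilibrium) is exactly what lets you replace $b_s^e$ by $m_1-k_l$ when $f_r>\mu$.

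Second, the identity you plan to lean on is off: from \eqref{eq:bse} one gets $2f_s b_s^e-f_s(m_1-k_l)=\sqrt D-(g_p+k_r)$, not $\sqrt D$, and carrying your version into the trace and determinant would corrupt both. The cleaner path avoids the radicals entirely: use the equilibrium relations $\bigl(b_s^e-(m_1-k_l)\bigr)f_s b_s^e=k_r b_p^e$ and $(k_l-b_p^e)f_s b_s^e=(k_r+g_p)b_p^e$. Then the planar block has $a_{11}=(g_p+k_r)-\sqrt D$ and $a_{22}=-f_s b_s^e-(k_r+g_p)$, so the trace equals $-\bigl(\sqrt D+f_s b_s^e\bigr)<0$ unconditionally, while the determinant collapses to $f_s\bigl(f_s (b_s^e)^2+(k_r+g_p)\,b_s^e+k_r b_p^e\bigr)>0$; moreover $a_{12}a_{21}=k_r f_s(k_l-b_p^e)\ge 0$ (since $b_p^e\le k_l$ by \eqref{eq:bpe}) forces the two planar eigenvalues to be real, so stability of that block is automatic and consumes none of the hypotheses --- all the load is carried by $\lambda_3$. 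Your degenerate-case computation is fine ($b_s^e=m_1-k_l$, $b_p^e=k_l$, planar block equal to $-f_s(m_1-k_l)$ times the identity), but note that ``stable node'' still requires $\lambda_3<0$, so the second hypothesis remains in force in that case as well.
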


\begin{lemma}
Assume
\begin{enumerate}
    \item $\Delta$ is compact and positively invariant,
    \item $m_1<m_1^*$ and $m_2<q$,
\end{enumerate}

Then the trivial equilibrium $(0,0,0)$ is globally asymptotically stable.
\label{lem:lemma4}
\end{lemma}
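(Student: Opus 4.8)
The plan is to combine the local spectral information from Lemma~\ref{lem:lemma1} with a global Lyapunov argument on the compact invariant triangle $\Delta$. First, under the hypotheses $m_1 < m_1^*$ and $m_2 < q$, parts (1)--(3) of Lemma~\ref{lem:lemma1} give $\lambda_1,\lambda_2,\lambda_3 < 0$, so the origin is locally asymptotically stable; moreover, by the trivial case of Lemma~\ref{lem:lemma2}, $(0,0,0)$ is the \emph{only} equilibrium in $\Delta$. Since $\Delta$ is compact and positively invariant, every trajectory has a nonempty $\omega$-limit set contained in $\Delta$, so it suffices to produce a Lyapunov function whose use, via LaSalle's invariance principle, forces this limit set to be $\{(0,0,0)\}$.

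For the Lyapunov function I would try the linear weighted functional $V(b_s,b_p,b_r) = \alpha b_s + \beta b_p + \gamma b_r$ with constants $\alpha,\beta,\gamma > 0$ to be chosen. Differentiating along~\eqref{eq:bs_s1}--\eqref{eq:br_s1} and separating the terms that are linear in the state, the coefficients of $b_s$, $b_p$, $b_r$ are $f_s[\alpha(m_1 - k_l) + \beta k_l]$, $\alpha k_r - \beta(k_r + g_p)$, and $f_r[\alpha q + \gamma(m_2 - q)]$ respectively. The key observation is that these three coefficients can all be made strictly negative precisely under the stated hypotheses: since $m_1 < m_1^* \le k_l$, choosing $\beta \frac{k_l}{k_l - m_1} < \alpha < \beta \frac{k_r + g_p}{k_r}$ renders the first two negative, and this interval is nonempty exactly because $m_1 < m_1^* = \frac{g_p k_l}{g_p + k_r}$; then choosing $\gamma > \alpha \frac{q}{q - m_2}$ (possible since $m_2 < q$) makes the third negative. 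Thus the linear part of $\dot V$ is negative definite, and the threshold conditions of the lemma are seen to be exactly the solvability conditions for the weights.

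The main obstacle is the sign of the remaining quadratic terms, and in particular the conjugation cross term: after collecting, $\dot V$ contains the negative self-terms $-\alpha f_s b_s^2$, $-\gamma f_r b_r^2$, $-\beta f_s b_s b_p$, $-\beta f_r b_r b_p$ together with the cross term $\bigl[(\gamma - \alpha)\mu - (\alpha + \gamma) f_r\bigr] b_r b_s$, whose coefficient need not be negative. I would control this term in two complementary ways. When $\mu \le f_r$ the coefficient is automatically nonpositive and $\dot V < 0$ on $\Delta \setminus \{0\}$ immediately. In the general case I would absorb the cross term into the negative self-terms $-\alpha f_s b_s^2 - \gamma f_r b_r^2$ by completing the square (equivalently, a Young inequality), exploiting the boundedness $0 \le b_s,b_r \le 1$ on $\Delta$ and the remaining freedom in the weights, so as to obtain $\dot V \le 0$ throughout $\Delta$ with equality only at the origin.

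Finally, with $\dot V \le 0$ on $\Delta$ and $\dot V = 0$ only at $(0,0,0)$, LaSalle's invariance principle shows the largest invariant subset of $\{\dot V = 0\}$ is $\{(0,0,0)\}$, so every trajectory in $\Delta$ converges to the origin; together with the local asymptotic stability from Lemma~\ref{lem:lemma1} this yields global asymptotic stability. I expect the delicate point to be verifying that $\alpha,\beta,\gamma$ can be chosen to satisfy \emph{simultaneously} the three linear-coefficient inequalities and the quadratic domination of the conjugation term; if no single choice works for all admissible parameters, the fallback is to retain $\dot V \le 0$ and instead characterize the largest invariant set in $\{\dot V = 0\}$ directly from~\eqref{eq:bs_s1}--\eqref{eq:br_s1}, which by uniqueness of the equilibrium still collapses to $\{(0,0,0)\}$.
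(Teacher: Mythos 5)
Your Lyapunov computation is correct as far as it goes, and its best feature is genuine: for $V=\alpha b_s+\beta b_p+\gamma b_r$ the linear-part coefficients are exactly as you state, the window $\beta\,k_l/(k_l-m_1)<\alpha<\beta\,(k_r+g_p)/k_r$ is nonempty precisely when $m_1<m_1^\ast$ (using $m_1^\ast\le k_l$), and $\gamma>\alpha q/(q-m_2)$ is available precisely when $m_2<q$ --- so the lemma's thresholds really are the solvability conditions for the weights. Moreover, in the regime $\mu\le f_r(C_0)$ the conjugation cross coefficient $c=(\gamma-\alpha)\mu-(\alpha+\gamma)f_r$ is automatically negative, all quadratic terms are then nonpositive, and your LaSalle step closes the argument. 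Since the paper's parameters have $\mu=10^{-7}$ while $f_r$ is order one, this case covers the setting the paper actually works in.

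The genuine gap is the case $\mu>f_r(C_0)$, and neither of your proposed repairs closes it. The constraint $\gamma>\alpha q/(q-m_2)$ is a hard lower bound with $\alpha$ confined to a bounded window, so once $\mu>f_r$ and $q-m_2$ is small the coefficient $c=\gamma(\mu-f_r)-\alpha(\mu+f_r)$ is forced positive and grows linearly in $\gamma$. Completing the square can absorb $c\,b_sb_r$ into $-\alpha f_s b_s^2-\gamma f_r b_r^2$ only if $c\le 2\sqrt{\alpha\gamma f_s f_r}$, which grows like $\sqrt{\gamma}$, while trading $c\,b_sb_r\le c\,b_r$ (via $b_s\le 1$) against the linear margin $a_3=f_r\bigl[\gamma(q-m_2)-\alpha q\bigr]$ works, after optimizing over admissible weights, only when roughly $m_2\mu<(2q-m_2)f_r$ --- violated for $0<m_2<q$ and $\mu$ large. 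So no choice of positive weights gives $\dot V\le 0$ on all of $\Delta$ in that regime, and your fallback is circular: LaSalle on $\{\dot V=0\}$ presupposes exactly the global inequality that fails. Nor is this a removable technicality: taking for instance $f_s=1$, $f_r=0.9$, $k_l=0.5$, $k_r=g_p=1$ (so $m_1^\ast=0.25$), $m_1=0$, $q=0.5$, $m_2=0.4$, $\mu=100$, the right-hand sides of \eqref{eq:bs_s1}--\eqref{eq:br_s1} vanish at an interior point with $b_s\approx 0.005$, $b_r\approx 0.4$, $b_p\approx 0.001$, so global stability of the origin cannot hold for all parameters satisfying only $m_1<m_1^\ast$, $m_2<q$; the lemma tacitly needs a weak-conjugation condition such as $\mu\le f_r(C_0)$. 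The clean way to finish under that condition is not quadratic absorption but the comparison/cascade route: \eqref{eq:br_s1} gives $\dot b_r\le(m_2-q)f_r\,b_r$, hence $b_r\to 0$ exponentially, after which the $(b_s,b_p)$ subsystem is asymptotically autonomous and the two-species argument of \cite{Szomolay2012} applies under $m_1<m_1^\ast$. I would restate your proof with $\mu\le f_r(C_0)$ as an explicit standing assumption, where your Lyapunov function then works exactly as written.
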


Next we state Theorems~\ref{th:theorem1}-~\ref{th:theorem2} describing the existence of steady-state solutions in more rigorous way.

Note that our model always admits a steady-state solution such that $L = 0$. We want to know that is there a steady-state solution for $L > 0$? To answer this, assume that $L > 0$ is given, and we will prove the existance of a non trivial steday state  for it.

\begin{theorem}
Fix $L>0$. Let there exist $C_0\in(0,K]$ and  $A_0\in(0,U]$ such that the frozen autonomous system~\eqref{eq:bs_s1}-~\eqref{eq:br_s1} has a nontrivial solution $(b_s,b_p,b_r)\in\Delta$ (Lemma 2 guarantees it under certain conditions),

Then there exists at least one continuous, nontrivial steady state $(B_s, \allowbreak B_p, \allowbreak B_r, \allowbreak B_d,\allowbreak A, \allowbreak C, \allowbreak v)$ for the system~\eqref{eq:Bs_s1}-~\eqref{eq:v_s1} satisfying the set of corresponding boundary conditions.

Conversely, if a continuous nontrivial steady-state solution exists then its value at $x=0$  must equal to the frozen equilibrium, i.e.,
$$
\big(B_s(0),B_p(0),B_r(0)\big) = (b_s,b_p,b_r),
$$

where $(b_s,b_p,b_r)$ is the frozen equilibrium with $A_0=A(0),\,C_0=C(0)$.
\label{th:theorem1}
\end{theorem}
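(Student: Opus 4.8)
The plan is to treat the two implications separately, dispatching the (easy) converse first and then building the nontrivial profile for the forward direction by exploiting the characteristic reparametrization~\eqref{eq:Eq_Char} that is already in place.

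For the converse, suppose a continuous nontrivial steady state exists. Evaluating the reduced equation~\eqref{eq:Bs_s2} at $x=0$ and using the boundary datum $v(0)=0$ forces the right-hand side to vanish there, and the same is true for~\eqref{eq:Bp_s2}--\eqref{eq:Br_s2}. Hence $(B_s(0),B_p(0),B_r(0))$ annihilates the frozen vector field evaluated at $A_0=A(0)$ and $C_0=C(0)$, i.e. it is an equilibrium of~\eqref{eq:bs_s1}--\eqref{eq:br_s1}. Since every term of that vector field carries a factor $b_s$, $b_p$ or $b_r$, the point $(0,0,0)$ is an equilibrium of the reparametrized system~\eqref{eq:bs}--\eqref{eq:br} for all $t$; by uniqueness of ODE solutions an orbit meeting $(0,0,0)$ is identically trivial, so a nontrivial steady state cannot have $(B_s(0),B_p(0),B_r(0))=(0,0,0)$. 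This identifies the boundary value with the nontrivial frozen equilibrium and requires no estimates.

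For the forward direction I would run a Schauder fixed-point argument on the convex, compact set $X$ of continuous $\Delta$-valued bacterial profiles on $[0,L]$. Given $(B_s,B_p,B_r)\in X$, first solve the two decoupled semilinear two-point problems~\eqref{eq:C_s1} and~\eqref{eq:A_s1} for $C$ and $A$ with their Neumann data at $x=0$ and Dirichlet data $C(L)=K$, $A(L)=U$; since the reaction terms are nonnegative, $C$ and $A$ are convex and monotone, and a Hopf-type maximum principle gives $0<C\le K$, $0\le A\le U$ with well-defined positive interior values $C_0:=C(0)$, $A_0:=A(0)$. Integrating~\eqref{eq:v_s1} with $v(0)=0$ yields a nonnegative, nondecreasing $v$ with $v'(0)>0$ at a nontrivial state. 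I would then reconstruct the bacterial profile by solving~\eqref{eq:bs}--\eqref{eq:br} along the characteristic~\eqref{eq:Eq_Char}: because $a(t)\to A_0$ and $c(t)\to C_0$ as $t\to-\infty$, this system is asymptotically autonomous with limiting frozen equilibrium $(b_s,b_p,b_r)$, and I would select the orbit converging to that equilibrium as $t\to-\infty$ (its existence following from roughness of the exponential dichotomy of the hyperbolic frozen equilibrium, the connecting orbit being unique in the sink case of Lemma~\ref{lem:lemma3}). Positive invariance of $\Delta$ keeps the orbit feasible, and pulling it back through $x=s(t)$ produces new profiles in $X$. Continuity and compactness of this solution operator (Arzel\`a--Ascoli from the uniform bounds on $\Delta$) then give a fixed point by Schauder, and the hypothesis furnished by Lemma~\ref{lem:lemma2} ensures that fixed point lies on the nontrivial branch rather than at $(0,0,0)$.

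The main obstacle is the degeneracy of the bacterial equations at $x=0$, where $v(0)=0$ makes $v\,dB_i/dx=F_i$ singular. The characteristic change of variables trades this singularity for an infinite-time limit, so the real work is to show that (i) a bounded connecting orbit to the frozen equilibrium genuinely exists and remains in $\Delta$ on all of $(-\infty,0]$, and (ii) the pulled-back profile meets the homogeneous Neumann condition $dB_i/dx(0)=0$, which reduces to controlling the ratio $(db_i/dt)/v(s(t))$ as $t\to-\infty$ through the eigenstructure of the frozen linearization. Closing the coupled loop is the delicate point, because the frozen equilibrium used to launch the orbit is itself determined by the elliptic outputs $A_0=A(0)$, $C_0=C(0)$; I would enforce this self-consistency through continuous dependence of the two-point BVP solutions on the bacterial coefficients and fold it into the same fixed-point map rather than adding a separate shooting layer. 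Since uniqueness is not claimed, the non-constructive nature of Schauder's theorem is exactly what the statement needs.
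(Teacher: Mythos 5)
Your converse contains a genuine error. Evaluating the reduced equations at $x=0$ with $v(0)=0$ and the Neumann data does show that $\big(B_s(0),B_p(0),B_r(0)\big)$ annihilates the frozen vector field at $A_0=A(0)$, $C_0=C(0)$ — that much is fine and matches the intended conclusion. But your exclusion of the trivial equilibrium ``by uniqueness of ODE solutions'' confuses finite-time intersection with asymptotic convergence. Because $v(0)=0$, the characteristic $s(t)$ of \eqref{eq:Eq_Char} reaches the substratum only as $t\to-\infty$ (the travel time diverges logarithmically when $v(s)\sim v'(0)s$), so a nontrivial orbit never \emph{meets} $(0,0,0)$ at finite time; it can perfectly well converge to it backward in time, exactly as $\dot x = x$ has the nontrivial solutions $x=Ce^{t}\to 0$ as $t\to-\infty$. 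Worse, Lemma~\ref{lem:lemma1} says the origin has a positive eigenvalue precisely under the conditions $m_1>m_1^*$ or $m_2>q$ that make nontrivial states possible, so the unstable set of $(0,0,0)$ is nonempty in the very regime the theorem addresses. Your argument therefore only establishes that the boundary value is \emph{some} equilibrium of the frozen system; identifying it with the nontrivial one requires a dynamical argument about backward limit sets of the asymptotically autonomous system \eqref{eq:bs}--\eqref{eq:br}, which is how the paper's approach (following Cogan et al.\ \cite{Szomolay2012}) proceeds.

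The forward direction has a related misconception at its core. You propose to select ``the orbit converging to the frozen equilibrium as $t\to-\infty$,'' calling it unique ``in the sink case of Lemma~\ref{lem:lemma3}.'' But for an \emph{autonomous} sink there is no nonconstant orbit converging to it backward in time (for $\dot x=-x$, $x=Ce^{-t}$ blows up as $t\to-\infty$ unless $C=0$); the solution you actually need is the bounded backward solution of the genuinely non-autonomous system driven by the decaying forcing $a(t)-A_0$, $c(t)-C_0$, obtained by variation of constants or, as you hint, from roughness of the exponential dichotomy. As phrased, your selection rule is vacuous in precisely the stable case the theorem cares about. Two further points are deferred rather than proved: Arzel\`a--Ascoli needs equicontinuity, not just uniform $\Delta$-bounds, and the equicontinuity near the substratum is exactly the ratio $(db_i/dt)/v(s(t))$ problem you flag yourself, so the compactness of your Schauder map is asserted but not established; and the nontrivial frozen equilibrium must exist at the pair $\big(A(0),C(0)\big)$ produced at \emph{every} iterate of your map, not merely at the hypothesized $(A_0,C_0)$ — closing this needs the monotonicity $A(0)\le U$, $C(0)\le K$ together with monotone dependence of the existence conditions of Lemma~\ref{lem:lemma2} on $(A_0,C_0)$. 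Note finally that folding the self-consistency $A_0=A(0)$, $C_0=C(0)$ into the fixed-point map, as you propose, is essentially the two-parameter shooting in $(A_0,C_0)$ that underlies the paper's construction (launch the profile from the frozen equilibrium along characteristics, then match $A(L)=U$, $C(L)=K$), so the architectural novelty of the Schauder formulation is smaller than it appears, while its analytic burden is heavier.
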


\begin{theorem}
For fixed $L>0$, and for
$m_1 \le m_1^* \;\text{and}\; m_2 \le q,$
there exists only trivial equilbrium of the system, i.e., 
$
B_s(x)\equiv B_p(x)\equiv B_r(x)\equiv 0, \quad \forall \; x\in[0,L].$
\label{th:theorem2}
\end{theorem}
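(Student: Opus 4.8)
The plan is to argue by contradiction, combining the converse half of Theorem~\ref{th:theorem1} with the ``trivial case'' of Lemma~\ref{lem:lemma2}, and then to propagate triviality from the substratum $x=0$ across the whole slab $[0,L]$ by a uniqueness argument for the characteristic ODE. First I would record a preliminary observation about the antibiotic profile, which pins down the value of $A_0$ at which the frozen system is to be read. At a steady state $A$ solves Eq.~\eqref{eq:A_s1}, so $D_A A'' = p_s(A)B_s + p_p(A)B_p + p_r(A)B_r \ge 0$ and $A$ is convex on $[0,L]$; together with the no-flux condition $A'(0)=0$ this forces $A'\ge 0$, so $A$ is non-decreasing and attains its minimum at $x=0$. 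Consequently $m_1(x)=1-k_dA(x)$ and $m_2(x)=1-k_d^rA(x)$ are \emph{maximal} at $x=0$, which means the hypotheses $m_1\le m_1^*$ and $m_2\le q$ read at $A_0=A(0)$ are the least favourable values and in fact hold throughout $[0,L]$.

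Next I would suppose, for contradiction, that a continuous nontrivial steady state $(B_s,B_p,B_r,B_d,A,C,v)$ exists. By the converse statement of Theorem~\ref{th:theorem1}, its value at the substratum must coincide with a frozen equilibrium of the autonomous system~\eqref{eq:bs_s1}--\eqref{eq:br_s1} taken at $A_0=A(0)$, $C_0=C(0)$, that is
\[
\big(B_s(0),B_p(0),B_r(0)\big)=(b_s,b_p,b_r)\in\Delta .
\]
But under $m_1\le m_1^*$ and $m_2\le q$ the trivial case of Lemma~\ref{lem:lemma2} asserts that $(0,0,0)$ is the \emph{only} equilibrium of the frozen system in $\Delta$. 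A self-contained check confirms this: the $b_r=0$ branch yields the candidate~\eqref{eq:bse}--\eqref{eq:bpe}, whose numerator $\sqrt{D}-g_p-k_r+f_s(m_1-k_l)$ is nonpositive precisely when $(g_p+k_r)m_1\le g_pk_l$, i.e.\ $m_1\le m_1^*$, so $b_s^e\le 0$ and the point leaves $\Delta$ except when it collapses to the origin; the analogous computation using $m_2\le q$ in Eq.~\eqref{eq:br_s1} excludes any branch with $b_r>0$. Hence $\big(B_s(0),B_p(0),B_r(0)\big)=(0,0,0)$.

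Finally I would propagate this to all $x$. Via the characteristic reparameterization~\eqref{eq:Eq_Char}, the bacterial fractions solve the first-order system~\eqref{eq:bs}--\eqref{eq:br} in $t$, for which $(0,0,0)$ is an equilibrium and the orbit satisfies $(b_s,b_p,b_r)\to(0,0,0)$ as $t\to-\infty$. By Lemma~\ref{lem:lemma1} the linearization at the origin has eigenvalues $\lambda_1<0$, $\lambda_2\le 0$, $\lambda_3\le 0$; when the inequalities are strict the origin is a hyperbolic sink with trivial unstable manifold, so no nonconstant orbit is backward-asymptotic to it, and the theory of asymptotically autonomous systems transfers this to the non-autonomous system~\eqref{eq:bs}--\eqref{eq:br}. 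Thus $b_s\equiv b_p\equiv b_r\equiv 0$, i.e.\ $B_s\equiv B_p\equiv B_r\equiv 0$ on $[0,L]$, contradicting nontriviality; recovering $B_d$ and $v$ from~\eqref{eq:sum_bi} and~\eqref{eq:v_s1} then gives the fully trivial state.

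The hard part will be this last, propagation, step in the degenerate boundary cases $m_1=m_1^*$ or $m_2=q$, where an eigenvalue vanishes and the origin is no longer hyperbolic, so the ``trivial unstable manifold'' argument breaks down. I would close these cases either by a centre-manifold reduction, showing that the flow on the one-dimensional centre manifold points toward the origin within $\Delta$, or, more robustly, by exploiting the positive invariance of $\Delta$ together with the explicit sign of the vector field~\eqref{eq:bs_s1}--\eqref{eq:br_s1} near the origin to construct a Lyapunov function; the strict-inequality subcase is already covered directly by the global asymptotic stability established in Lemma~\ref{lem:lemma4}.
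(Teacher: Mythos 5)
Your architecture is exactly the one the paper's own proof is built on --- the converse half of Theorem~\ref{th:theorem1} pins $\big(B_s(0),B_p(0),B_r(0)\big)$ to a frozen equilibrium, the trivial case of Lemma~\ref{lem:lemma2} forces that equilibrium to be $(0,0,0)$, and triviality is then propagated along the characteristics~\eqref{eq:Eq_Char} using the stability of the origin --- and your preliminary observation that $A$ is convex with $A'(0)=0$, hence minimal at $x=0$, so the hypotheses read at $A_0=A(0)$ are the binding ones, is correct and worth keeping. Two repairs are needed even in the strict-inequality regime. First, ``the theory of asymptotically autonomous systems'' (Markus--Thieme) is the wrong citation: it governs forward $\omega$-limit sets, whereas you need a backward-in-time rigidity statement. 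The correct elementary substitute: since the limit Jacobian is Hurwitz, choose an adapted inner product with $\frac{d}{dt}|b|^2 \le -\eta |b|^2$ for all sufficiently negative $t$ and small $|b|$; integrating backward gives $|b(t)| \ge |b(t_1)|\,e^{\eta (t_1-t)/2}$ for $t\le t_1$, which contradicts boundedness in $\Delta$ as $t\to-\infty$ unless $b(t_1)=0$, and forward uniqueness (the origin is invariant for the non-autonomous field, every term carrying a factor of some $b_i$) then gives $b\equiv 0$. Second, your ``self-contained check'' excluding a branch with $b_r>0$ is incomplete: dividing~\eqref{eq:br_s1} by $b_r$ gives $b_r=(m_2-q)+\frac{\mu-f_r}{f_r}\,b_s$, which is not forced nonpositive by $m_2\le q$ alone when $\mu\ge f_r(C_0)$; this is harmless only because Lemma~\ref{lem:lemma2} is your actual citation, but you should not present the check as a verification.

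The genuine gap is the one you flag but do not close: the theorem's hypotheses are non-strict, and in the boundary cases $m_1=m_1^*$ or $m_2=q$ an eigenvalue of the frozen linearization vanishes, the exponential estimate above degenerates ($\eta=0$), and linearization cannot exclude a bounded nonzero backward orbit; moreover Lemma~\ref{lem:lemma4} requires \emph{strict} inequalities, so it cannot be invoked there, and a sketch of ``center manifold or Lyapunov function'' is not a proof. The Lyapunov route does work and is short, so carry it out: for $m_2=q$ (with $\mu\le f_r$), equation~\eqref{eq:br_s1} gives $\dot b_r=-f_r b_r^2+(\mu-f_r)b_s b_r\le 0$, so $b_r$ is nonincreasing along the orbit and its $\alpha$-limit $0$ is its supremum, forcing $b_r\equiv 0$; then, with $\lambda=(k_r+g_p)/k_r$ and $V=\lambda b_s+b_p$, a direct computation on the remaining $(b_s,b_p)$ subsystem yields $\dot V=f_s b_s\bigl(\tfrac{(k_r+g_p)m_1-g_p k_l}{k_r}-V\bigr)$, and at $m_1=m_1^*$ the constant term vanishes, so $\dot V=-f_s b_s V\le 0$ and the same supremum argument gives $V\equiv 0$, i.e.\ $b_s\equiv b_p\equiv 0$. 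These one-sided estimates also transfer to the actual non-autonomous orbit precisely because of your own monotonicity observation: $a(t)\ge A_0$ implies $m_1(t)\le m_1^*$, $m_2(t)\le q$ and $g_p(a(t))\ge g_p(A_0)$ along the characteristic, so the signs are preserved with time-dependent coefficients. Until this (and the degenerate possibility $C_0=0$, where $f_s=f_r=0$ and the frozen system acquires continua of equilibria, so a standing assumption $C(0)>0$ should be made explicit) is written out, your proof establishes the theorem only on the open parameter region, not the closed one in the statement.
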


In the previous two theorems, we did not consider the ODE of $L$ and took $L$ to be fixed. Next we will give a very useful theorem that tells us the minimum dose required to eradicate the biofilm and to have $L=0$.

\begin{theorem}
There exist \(0\le u_1\le u_2<\infty\) such that
\begin{enumerate}
  \item For $u_0 \in [0,u_1)$,  there exists at least one continuous nontrivial steady-state with \(L>0\), and
  \begin{equation}
    u_1 \;\ge\; \min\!\Big\{\,\frac{1-m_1^\ast}{k_d},\;\frac{1-q}{k_d^r}\,\Big\}  
  \label{eq:insuf_dose}
  \end{equation}
  \item For $u_0\ge u_2$, there exists only trivial steady-state solution  with $L=0$, and 
  \begin{equation}
    u_2 \le \frac{\cosh\!\big(\phi_a^*\, g_*(K)/\sigma\big)}{\min(k_d,k_d^r)}
  \label{eq:suf_dose}
  \end{equation}
  where \( (\phi_a^*)^2 := \frac{\underline\alpha}{D_A} \; \text{and} \;  g_*(K):=f_s(K)+f_r(K)\).
\end{enumerate}
\label{th:theorem3}
\end{theorem}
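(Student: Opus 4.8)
The plan is to make both thresholds explicit and then verify the two regime statements directly, the technical heart being a penetration (decay) estimate that bounds the antibiotic concentration at the substratum $A(0)$ from below by the applied surface dose $u_0$. Throughout I would use the frozen--equilibrium dictionary already established: by Lemma~\ref{lem:lemma2} a nontrivial frozen equilibrium exists precisely when $m_1>m_1^*$ or $m_2>q$, and by Theorem~\ref{th:theorem2} only the trivial equilibrium survives when $m_1\le m_1^*$ and $m_2\le q$. Since $m_1=1-k_dA_0$ and $m_2=1-k_d^rA_0$ are decreasing in the deep antibiotic value $A_0=A(0)$, these conditions are really inequalities on $A_0$, so the whole problem reduces to controlling $A_0$ in terms of $u_0=A(L)$.

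For statement (1) (insufficient dose) I would first record the monotonicity $A(0)\le A(L)=u_0$: at steady state Eq.~\eqref{eq:A_s1} gives $A''\ge0$, and with $A'(0)=0$ this makes $A$ nondecreasing, so $A_0\le u_0$. Hence if $u_0<\min\{(1-m_1^*)/k_d,\,(1-q)/k_d^r\}$ then $A_0\le u_0$ forces both $m_1>m_1^*$ and $m_2>q$. Lemma~\ref{lem:lemma2} then yields a nontrivial frozen equilibrium in $\Delta$ for the admissible deep values $(A_0,C_0)$, and Theorem~\ref{th:theorem1} lifts it to a continuous nontrivial steady state with $L>0$. As this holds for every such $u_0$, the lower threshold obeys the bound~\eqref{eq:insuf_dose}.

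For statement (2) (sufficient dose) I need two a priori estimates valid for any steady state. First a length bound: integrating Eq.~\eqref{eq:v_s1} with $v(0)=0$, using $C\le K$ (from convexity of $C$ and $C(L)=K$) and $B_s+B_r\le1$, gives $v(L)\le g_*(K)\,L$; with $v(L)=\sigma L^2$ this yields $L\le g_*(K)/\sigma$. Second, and this is the crux, a penetration estimate: each $p_i$ is concave with $p_i(0)=0$, so $p_i(A)\le p_i'(0)A$, and with $\sum_i B_i\le1$ the right side of Eq.~\eqref{eq:A_s1} obeys $\sum_i p_i(A)B_i\le\underline\alpha\,A$ with $\underline\alpha=\max_i p_i'(0)=\max_i \nu_i/L_i$. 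Thus $A''\le(\phi_a^*)^2A$. Multiplying by the nonnegative factor $A'$ and integrating from $0$ (where $A'=0$) gives $(A')^2\le(\phi_a^*)^2\big(A^2-A(0)^2\big)$; separating variables and integrating to $x=L$ yields $\cosh^{-1}\!\big(u_0/A(0)\big)\le\phi_a^*L$, i.e.
\[
A(0)\ \ge\ \frac{u_0}{\cosh(\phi_a^* L)}\ \ge\ \frac{u_0}{\cosh\!\big(\phi_a^*\,g_*(K)/\sigma\big)},
\]
where the last step uses $L\le g_*(K)/\sigma$ and monotonicity of $\cosh$.

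To close, set $u_1:=\min\{(1-m_1^*)/k_d,(1-q)/k_d^r\}$ and $u_2:=\cosh(\phi_a^* g_*(K)/\sigma)/\min(k_d,k_d^r)$. If $u_0\ge u_2$ the displayed estimate gives $A(0)\ge 1/\min(k_d,k_d^r)\ge\max\{(1-m_1^*)/k_d,(1-q)/k_d^r\}$ (using $m_1^*,q\ge0$), whence $m_1\le m_1^*$ and $m_2\le q$; Theorem~\ref{th:theorem2} then forces $B_s\equiv B_p\equiv B_r\equiv0$, and Eq.~\eqref{eq:v_s1} gives $v\equiv0$, so $L=0$ from $v(L)=\sigma L^2$, proving~\eqref{eq:suf_dose}. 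The ordering $u_1\le 1/\max(k_d,k_d^r)\le 1/\min(k_d,k_d^r)\le u_2$ follows from $m_1^*,q\ge0$ and $\cosh\ge1$, and $u_2<\infty$ is immediate. I expect the penetration estimate to be the main obstacle: one must justify the comparison/energy inequality carefully (including the degenerate dead--cell region $[0,x_0]$ where $A'\equiv0$, which only shortens the effective integration interval and preserves the bound) and pin down the constant $\underline\alpha$; by contrast the two monotonicity facts $A_0\le u_0$ and $L\le g_*(K)/\sigma$ are comparatively routine.
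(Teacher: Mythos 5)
Your proposal is correct and follows essentially the same route as the paper's proof: the lower threshold via the monotonicity $A(0)\le A(L)=u_0$ combined with Lemma~2 and Theorem~1, the length bound $L\le g_*(K)/\sigma$ from integrating the velocity equation against $v(L)=\sigma L^2$, and the penetration estimate $A''\le(\phi_a^*)^2 A$ integrated (after multiplying by $A'\ge 0$) to give $A(0)\ge u_0/\cosh(\phi_a^* L)$, which is exactly where the constants $\cosh\bigl(\phi_a^*\,g_*(K)/\sigma\bigr)$, $\min(k_d,k_d^r)$ and $\underline\alpha$ in the theorem statement come from. Your handling of the degenerate dead-cell region and the identification $\underline\alpha=\max_i \nu_i/L_i$ (so that $\sum_i p_i(A)B_i\le\underline\alpha A$) are consistent with the paper's argument, and the closing step $m_1\le 0\le m_1^*$, $m_2\le 0\le q$ followed by Theorem~2 and $v\equiv 0\Rightarrow L=0$ matches as well.
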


 The corresponding proofs of lemmas and theorems are given in the Appendix. 

\section{Numerical Results}
\label{sec:numerical_results}
We discuss numerical simulations in this section. Two different analysis have been done and presented in this section. First, we apply continuous and periodic dosing by trial-and-error approach to kill the bacteria. In this approach, the initial dose is kept constant throughout the simulations. Later we apply Pontryagin's maximum principal (PMP) to optimize the dose for both continuous and periodic scenarios which results in a tapered dosing schedule. PMP finds the amount of dosing that is required to apply at the substratum to kill all the bacteria and to eradicate the biofilm. An important part of this section is alignment of these numerical results with the lemmas and theorems discussed earlier. We calculated the total dose used in trial-and-error approach and compared them with the case when an optimal dosing strategy based on dose tapering governed by Pontryagin's maximum principle is adopted to kill the bacteria.

Since mathematical analysis shows that three steady states exist for~\eqref{eq:Bs}-~\eqref{eq:L}, hence based on this analysis we emphasize on three steady states for fixed continuous dosing (see in Figures~\ref{fig:myimage1}--\ref{fig:myimage3}) and fixed periodic dosing (see in Figures~\ref{fig:myimage4}--\ref{fig:myimage6}). After that we have added the numerical results for optimal analysis showing the profile of $\delta(t)$, where $u(t)= A_0 \delta(t)$, calculated numerically governed by the optimality condition.

\begin{table}[H]
\centering
\caption{Description of parameters used}
\vspace{1mm}
\label{tab:parameters}
\begin{tabular}{|p{5cm}|l|l|l|}
\hline
\textbf{Description} & \textbf{Symbol} & \textbf{Units} & \textbf{Value} \\
\hline
Diffusion coefficient of antibiotic & $D_A$ & $mm^2 h^{-1}$ & $7 \times 10^{-2}$ \cite{DC1}  \\
\hline
Diffusion coefficient of nutrient & $D_C$  & $mm^2 h^{-1}$ & $7 \times 10^{-2}$ \cite{DC1} \\
\hline
Killing rate of susceptible bacteria & $k_d$ &  $h^{-1}$ & 0.35  \\
\hline
Killing rate of resistant bacteria & $k_d^r$ & $h^{-1}$ & 0.15 \\
\hline
Killing rate of persister bacteria & $k_d^p$ & $h^{-1}$ & 0.1 \\
\hline
Conversion rate of $B_s$ to $B_p$ & $k_l$ & $h^{-1}$ & 0.01 \cite{Szomolay2012} \\
\hline
Reversion of persister to susceptible bacteria & $k_r$ & $h^{-1}$ & 0.005 \cite{Szomolay2012}  \\
\hline
Conjugational transfer rate & $\mu$ & $h^{-1}$ & 0.0000001 \cite{ADN10}, \cite{ADN12} \\
\hline
Removal rate of antibiotic by $B_s$ & $\nu_s$ & $h^{-1}$ & 0.345 \cite{ADN10}, \cite{ADN12} \\
\hline
Removal rate of antibiotic by $B_p$ & $\nu_p$ &$h^{-1}$ & 0.1\\
\hline
Removal rate of antibiotic by $B_r$ & $\nu_r$ &$h^{-1}$ & 0.3  \\
\hline
Maximum specific growth rate of $B_s$ & $\alpha_s$ & $h^{-1}$ & 1  \cite{Szomolay2012} \\
\hline
Maximum specific growth rate of $B_r$ & $\alpha_r$ & $h^{-1}$ & 0.5 \\
\hline
Detachment rate per unit length & $\sigma$ & $mm^{-1}\; h^{-1}$ & 0.1 \cite{Szomolay2012} \\
\hline
Fractional segregation loss parameter  & $q$ & dimensionless & 0.01 \cite{ADN10}, \cite{ADN12}\\
\hline
Initial dimensionless density of susceptible bacteria& $B_{s_0}$ & dimensionless & 0.70 \\
\hline
Initial dimensionless density of persistent bacteria& $B_{p_0}$ & dimensionless & 0.20 \\
\hline
Initial dimensionless density of resistant bacteria& $B_{r_0}$ & dimensionless & 0.10  \\
\hline
Initial dimensionless density of dead bacteria & $B_{d_0}$ & dimensionless & 0  \\
\hline
Initial dimensionless density of nutrient& K  & dimensionless & $1$ \cite{Szomolay2012} \\
\hline
Half saturation constant & $L_i$ & dimensionless & 0.1   \cite{ADN10}, \cite{ADN12} \\
\hline
Antibiotic sensitivity parameter & $\beta$ & dimensionless & 0.001  \cite{ADN10}, \cite{ADN12} \\
\hline
Treatment time & T & $ h $& 100 \cite{Szomolay2012} \\
\hline
Initial biofilm length & $L_0$ & mm& 5 \cite{Szomolay2012} \\
\hline
Width of the pulse  of antibiotic dose & a & $[h^{-1}]$ & 5 \\
\hline
\end{tabular}
\end{table}

\subsection{Continuous Dosing}
Figure~\ref{fig:myimage1} shows the full non trivial steady state in which all the bacteria and biofilm length $b_s, b_p, b_r, L \neq 0$ and leads to treatment failure when the initial dose is insufficient to kill the bacteria. In this case, the administered dose is 1 and the condition for the existance of non trivial steady state is satisfied: $m_1=0.65 > m_1^*=0.01$ and $m_2=0.85 > q=0.01$

Figure~\ref{fig:myimage2} shows the second non trivial equilibrium $b_s, b_p, L \neq 0,$ and $b_r=0$, this is obtained by  increasing the initial dose to 2.5 by keeping the parameter values same, where $m_1=0.125 > m_1^*=0.01$ and $m_2=0.625 > q=0.01$. 

Finally, when we increase the initial dosing to 10, we get the trivial steady state $b_s=0, b_p=0, b_r=0, L=0$ that can be seen in Figure~\ref{fig:myimage3}. In this case, $m_1=-2.5 \le m_1^*=0.01$ and $m_2=-0.5 \le q=0.01$. These results are consistent with our analytic results stated in lemmas and theorems. The total antibiotic used over the entire time course in these three cases is 100, 250 and 1000 units respectively.

\begin{figure}[H]
    \centering
    \includegraphics[width=0.7\textwidth]{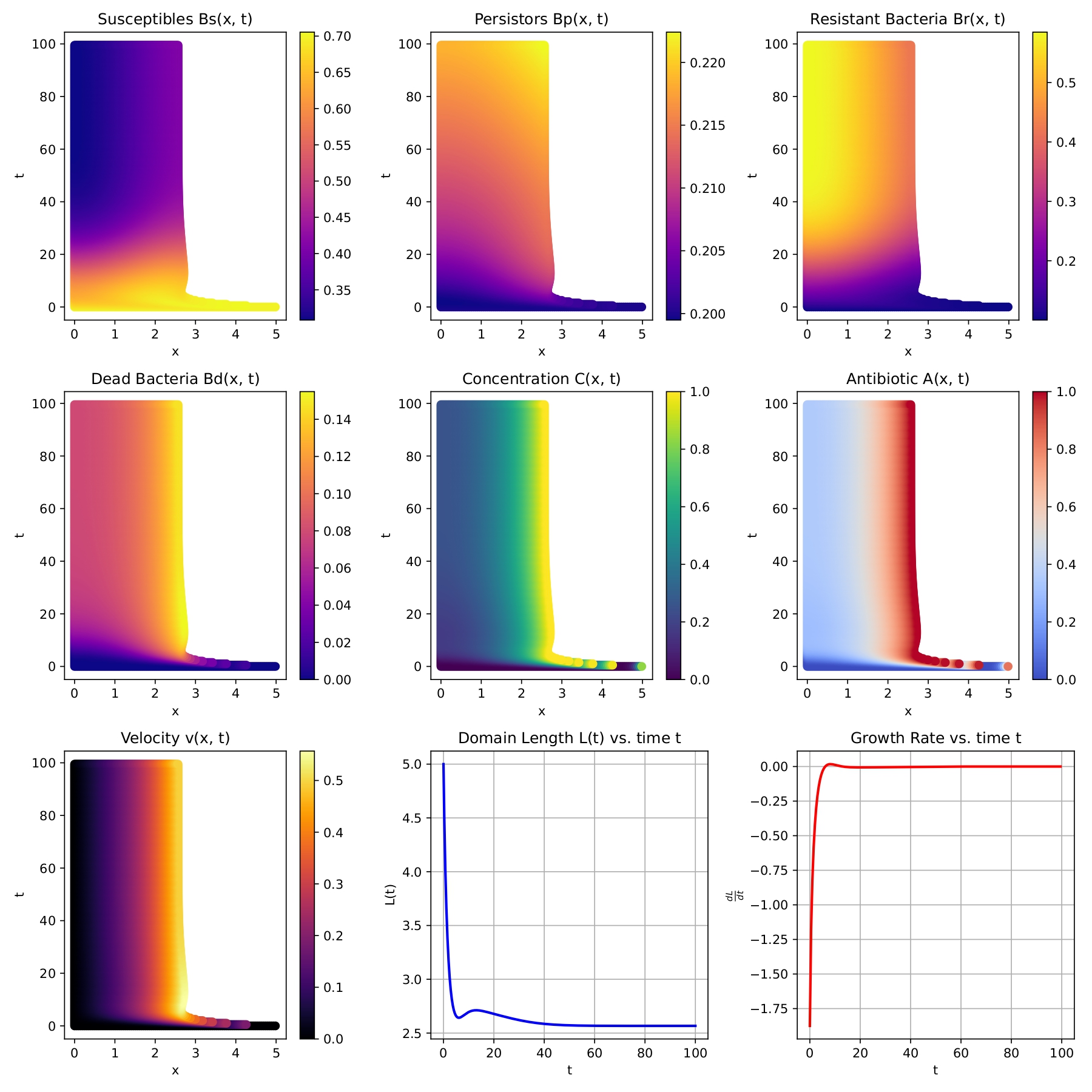} 
    \caption{ Completely Non-Trivial Steady State (Total antibiotic used: 100 units)}
    \label{fig:myimage1}
\end{figure}

\begin{figure}[H]
    \centering
    \includegraphics[width=0.7\textwidth]{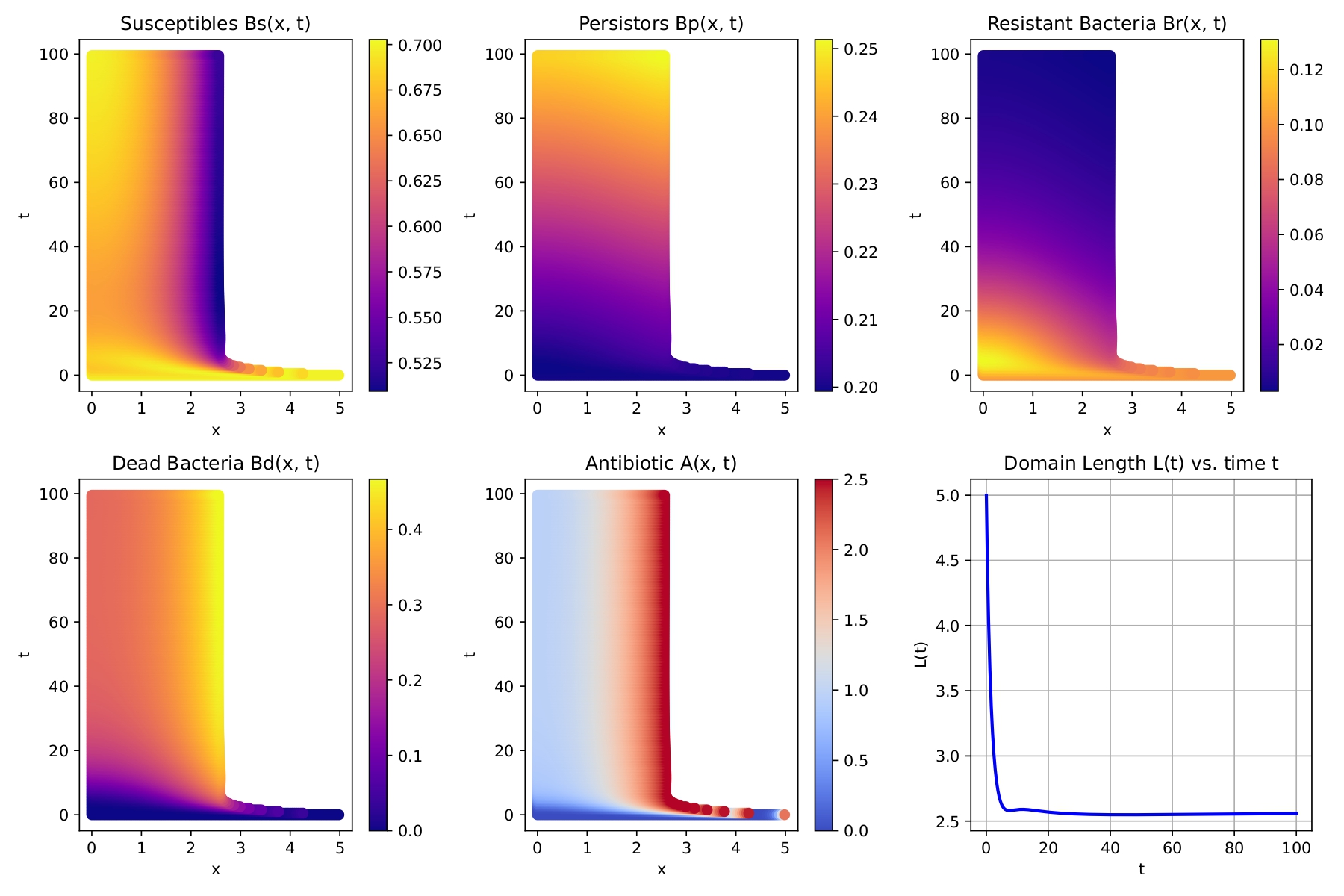} 
    \caption{Partially Non-Trivial Steady State (Total antibiotic used: 250 units)}
    \label{fig:myimage2}
\end{figure}

\begin{figure}[H]
    \centering
    \includegraphics[width=0.7\textwidth]{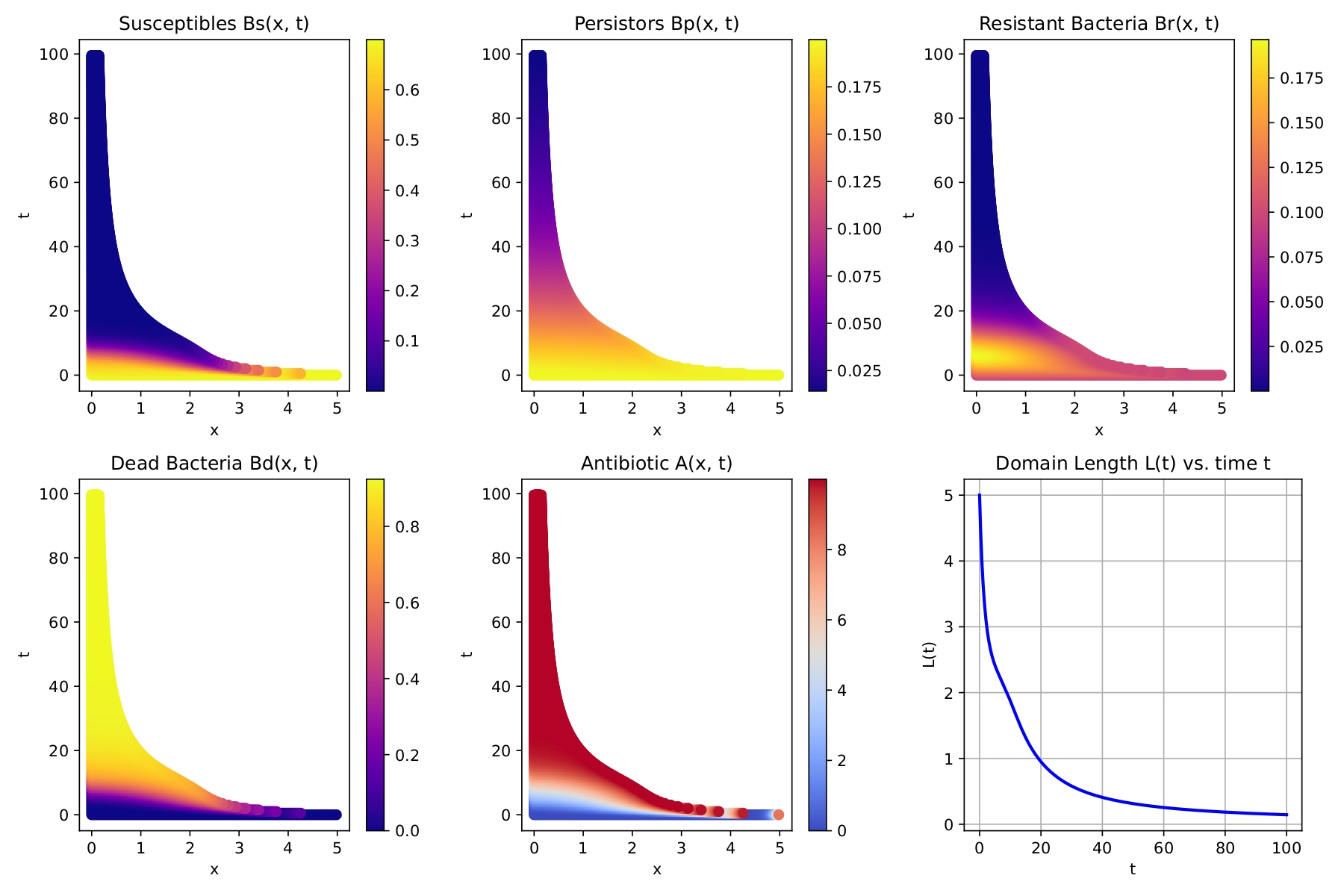}
    \caption{Trivial Steady State (Total antibiotic used: 1000 units)}
    \label{fig:myimage3}
\end{figure}
\subsection{Periodic Dosing}
 
Figure~\ref{fig:myimage4} illustrates the complete non-trivial steady state where all bacterial populations and the biofilm length remain positive, i.e., $b_s, b_p, b_r, L \neq 0$, when dose is applied to be $5$ units after each 6 hours. This scenario represents treatment failure, occurring when the administered dose is insufficient to eradicate the bacteria.
Figure~\ref{fig:myimage5}  presents the second non-trivial equilibrium, where $b_s, b_p, L \neq 0$ but $b_r=0$. This state emerges when the initial dose is increased to $8$ units, while keeping the parameters unchanged.
Finally, by further increase in the dose to $40$ units, the system reaches the trivial steady state with $b_s=0, b_p=0, b_r=0, L=0$ that is shown in Figure~\ref{fig:myimage6}. These outcomes are consistent with the analytical predictions established in the lemmas and theorems. Note that the antibiotic consumed over the entire time course in these three periodic dosing scenarios is 80, 128 and 640 units respectively.

In the continuous dosing strategy, a total antibiotic amount of approximately 1000 units was required to achieve complete biofilm eradication. In contrast, the  periodic dosing approach achieved successful treatment with only 640 units of the drug, indicating a more efficient use of antibiotics when dosing is applied intermittently. This clearly demonstrates that the periodic administration pattern enhances treatment effectiveness by allowing bacterial populations to experience varying drug concentrations over time, which reduces the persistence and resistance effects.
\begin{figure}[H]
    \centering
    \includegraphics[width=0.7\textwidth]{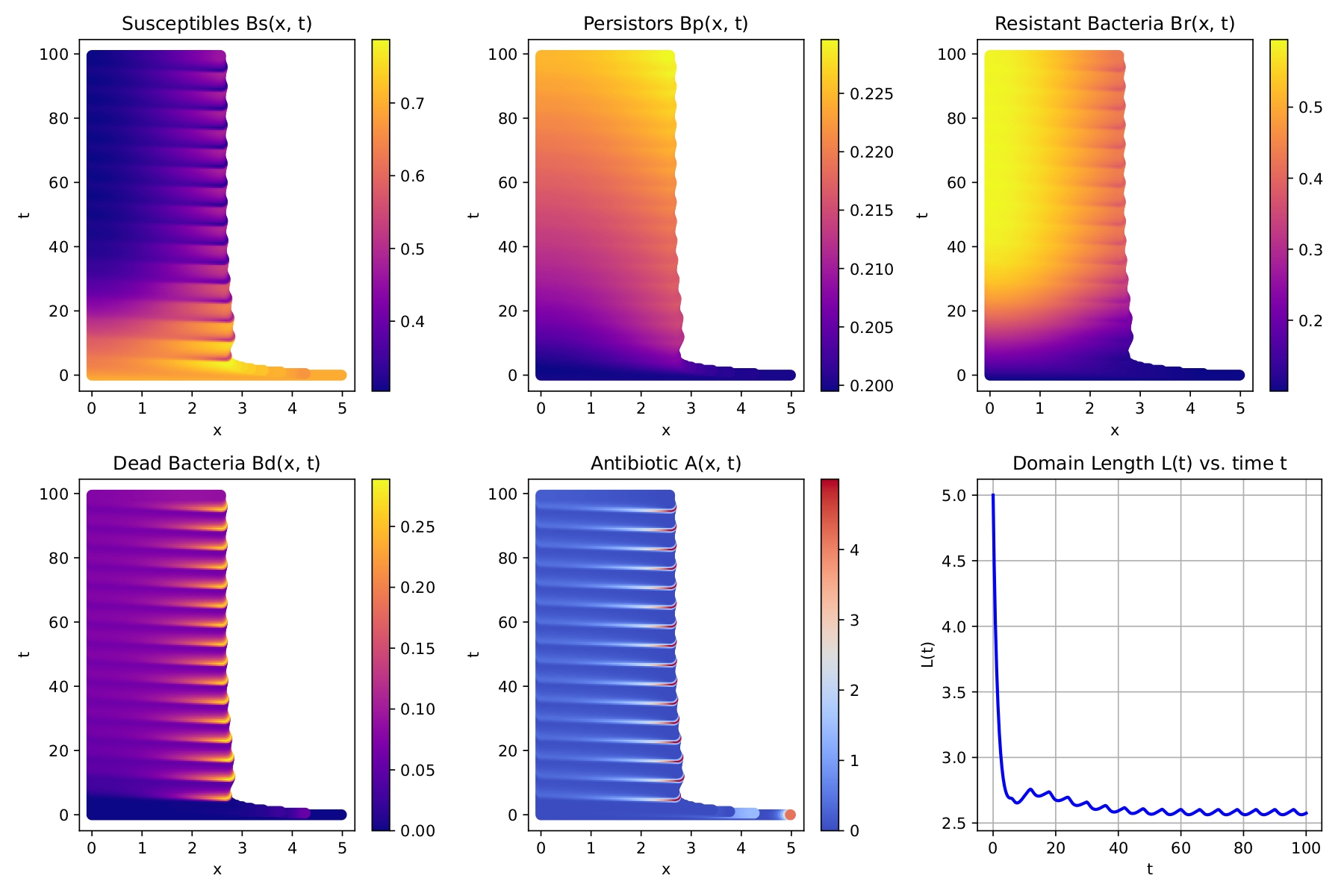} 
    \caption{Completely Non-Trivial Steady State (Total antibiotic used: 80 units)}
    \label{fig:myimage4}
\end{figure}

\begin{figure}[H]
    \centering
    \includegraphics[width=0.7\textwidth]{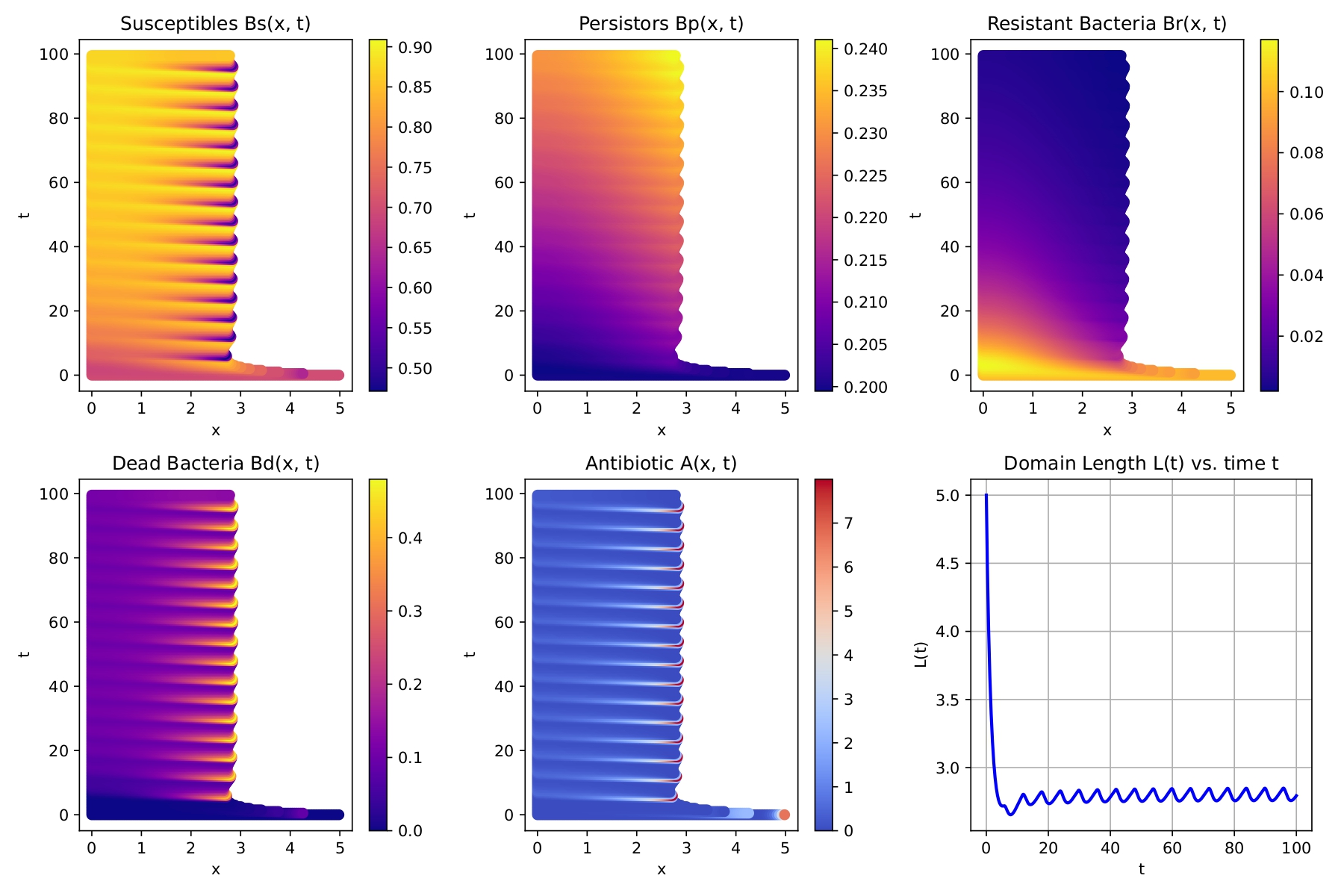}
    \caption{Partially Non-Trivial Steady State (Total antibiotic used: 128 units)}
    \label{fig:myimage5}
\end{figure}

\begin{figure}[H]
    \centering
    \includegraphics[width=0.7\textwidth]{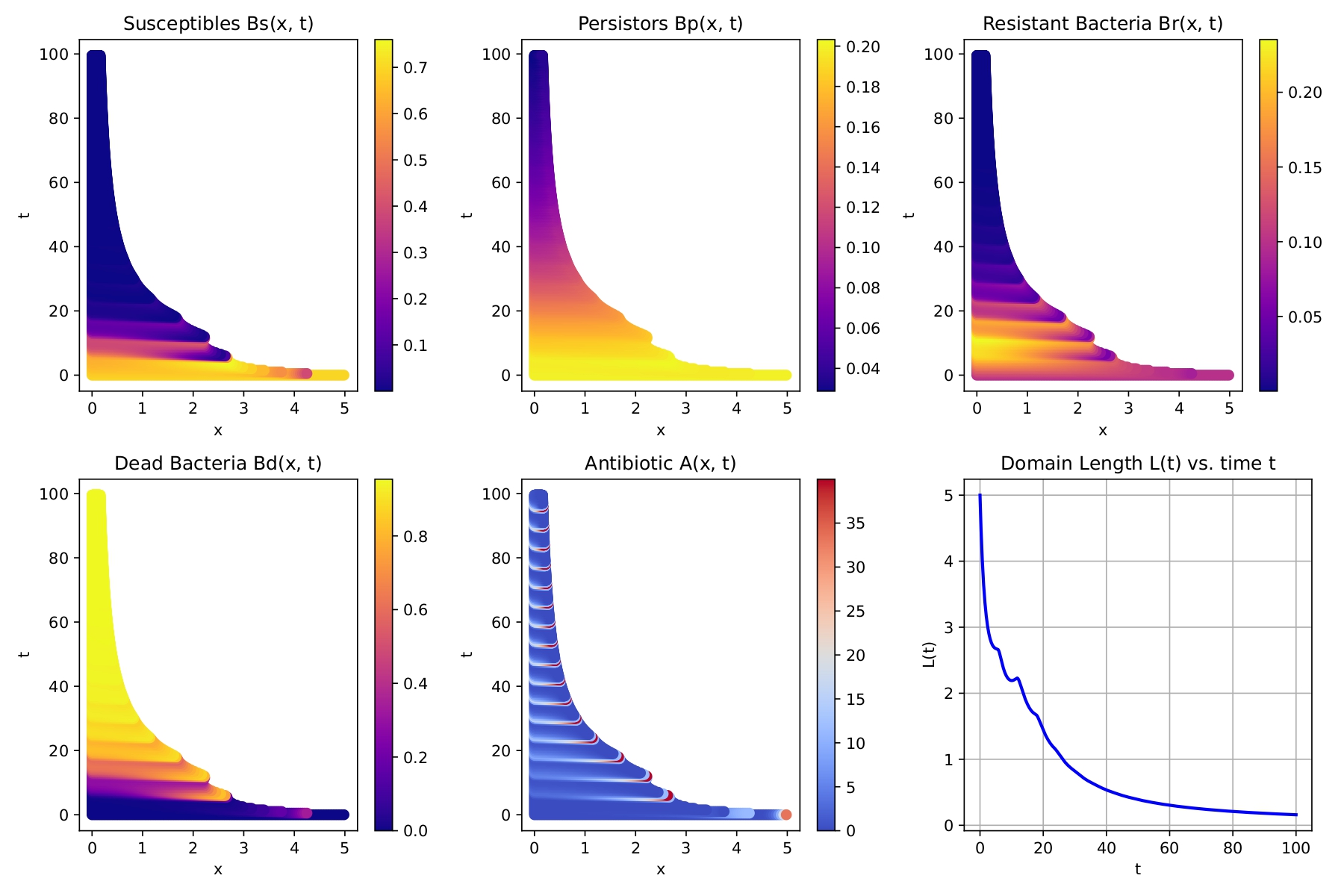} 
    \caption{Trivial Steady State (Total antibiotic used: 640 units)}
    \label{fig:myimage6}
\end{figure}

We note in passing that the seemingly paradoxical result of the existence of the steady state $b_r=0, b_s,b_p \neq 0 $ in which the susceptible bacteria survive (along with persisters) while resistant bacteria are eradicated, is a consequence of mis-segregation of resistant bacteria. As long as there are resistant bacteria, they will give rise to susceptible ones during cell division, hence it is impossible to have a steady state with no susceptible bacteria while the resistant strain persists. 

\section{Optimal Control}
\label{sec:optimal_control}
This section addresses the challenge of identifying an antibiotic treatment plan that minimizes the overall amount of antibiotic used while also killing the active bacteria. Numerous studies have shown that using antibiotics excessively can have unfavorable effects \cite{OC2}. In fact, it has also been proposed that this could potentially make people more susceptible to infection by boosting their effective resistance \cite{OC1}. We are also motivated to use as few antibiotics as possible during the course of treatment due to their high price.

Since the dilution rate is too low to completely destroy the bacteria on their own, antibiotics must be administered concurrently. However, since no new and superior antibiotics have been developed to combat microbial strains, we must use the ones we do have more efficiently \cite{OC3}.
Our effort focuses on developing an effective antibiotic application approach that ensures minimal drug deployment while simultaneously eliminating germs.
As stated in \cite{OC4}, we propose the best course of action for our model using optimal control theory.
The results indicate that the best course of therapy is to gradually decrease the strength of the administered doses while alternating between the application and withdrawal of antibiotics.

On the basis of this, both provide appropriate dosing techniques for beginning conditions and parameter values. We have observed that these strategies prevent the bi-stability that arises in the non-optimal antibiotic application situation i.e., the bacteria-free state is stable for a wide range of starting bacterial concentrations. Therefore, according to our hypothesis, using such a course of treatment guarantees the eradication of germs and, consequently, the treatment of sickness.

 We use the theory of constrained optimization to achieve our objective of determining the optimal course of action. We allocate expenses to the use of antibiotics and to the active bacteria that survive. We want to lower the total of these expenses in order to apply antibiotics as little as possible while simultaneously eradicating the bacterial population.

 Let $u(t) = A_0 \delta(t)$, where $A_0$ gives the maximum concentration of antibiotic that can be applied and $\delta(t)$ is a function that controls the precise timing of the
 dosing protocol by determining the amount of antibiotic being let through at time $t$, $\delta(t) \in [0,1] \forall t \ge 0 $. By allocating expenses to each dosage procedure based on their antibiotic usage and killing efficiency, this can be stated mathematically.

 The target now is to determine $\delta(t)$ that minimizes the functional
\begin{equation}
J (B_s, B_p, B_r, L, \delta) = \int_0^t \int_0^{L(t)} (B_s+ B_p+ B_r) \, dx \, dt 
+ \int_0^t L\, dt
+ \int_0^t \frac {\beta}{2} \delta^2(t)  \, dt.
\label{eq:cost_func}
\end{equation}

subject to the PDE system above, where $\beta$ here is the suitable cost sensitivity parameter.   
Similar functionals have been utilized widely in the literature, including \cite{OCG1, OCG2, OCG3, OCG4}, with notable results.
 First, we prove the following theorem whose proof can be seen in Appendix.

\begin{theorem}
A unique control function $\delta^*(t)$ to minimize the cost functional can be determined as defined in\eqref{eq:cost_func}.

\end{theorem}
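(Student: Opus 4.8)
The plan is to treat this as a PDE-constrained optimal control problem and to argue in three stages: existence of a minimizer by the direct method, characterization of the optimum through an adjoint (costate) system and a pointwise optimality condition, and finally uniqueness via the strict convexity of the control penalty combined with a Gronwall/contraction estimate.

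First, for existence, I would fix the admissible set $\mathcal{U}=\{\delta\in L^2(0,T):0\le\delta(t)\le 1 \text{ a.e.}\}$, which is nonempty, convex, bounded and closed in $L^2(0,T)$, hence weakly sequentially compact. Taking a minimizing sequence $\{\delta^n\}$ for $J$, I would first establish uniform a priori bounds on the corresponding states $(B_s^n,B_p^n,B_r^n,A^n,C^n,v^n,L^n)$: the volume-fraction constraint \eqref{eq:sum_bi} keeps the $B_i$ in $[0,1]$, a maximum-principle argument bounds $A^n$ and $C^n$ by their boundary data $u$ and $K$, and \eqref{eq:v} together with \eqref{eq:L} bounds $v^n$ and $L^n$. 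These bounds, combined with energy estimates for the parabolic equations \eqref{eq:A} and \eqref{eq:C}, give compactness of the state trajectories through an Aubin--Lions type argument, applied after mapping the moving domain $[0,L^n(t)]$ onto a fixed reference interval. Passing to a subsequence, $\delta^n\rightharpoonup\delta^*$ weakly in $L^2$ and the states converge strongly enough that the two linear integral terms in \eqref{eq:cost_func} pass to the limit, while the quadratic control term is convex and continuous, hence weakly lower semicontinuous. Therefore $J(\delta^*)\le\liminf J(\delta^n)$ with $\delta^*\in\mathcal U$, so $\delta^*$ is a minimizer.

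Second, for the characterization, I would introduce adjoint variables $(\lambda_s,\lambda_p,\lambda_r,\lambda_d,\lambda_A,\lambda_C,\lambda_v,\lambda_L)$ solving the adjoint system obtained by linearizing \eqref{eq:Bs}--\eqref{eq:L} about the optimal trajectory, run backward in time with the transversality conditions at $t=T$ dictated by the integrands of $J$. Forming the Hamiltonian, or equivalently computing the Gateaux derivative of the reduced cost $\delta\mapsto J(\delta)$ in a feasible direction, the stationarity condition $\partial H/\partial\delta=0$ yields an explicit representation of the form $\delta^*(t)=\min\{1,\max\{0,\Phi(t)/\beta\}\}$, where $\Phi(t)$ collects the boundary coupling of $\lambda_A$ with the control through $A(L(t),t)=u(t)=A_0\delta(t)$. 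This is the extended Pontryagin maximum principle referenced in the introduction.

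Finally, for uniqueness, the quadratic penalty $\tfrac{\beta}{2}\delta^2$ makes the control cost strictly convex, but the state depends nonlinearly on $\delta$ through the coupled system, so strict convexity of the full reduced functional is not automatic. I would therefore establish uniqueness by a Lipschitz/Gronwall argument: assuming two optimal controls $\delta_1^*,\delta_2^*$ with their associated states and adjoints, I would estimate the state differences forward in time and the adjoint differences backward in time, using the uniform bounds from the existence step and the local Lipschitz continuity of $f_i$, $g_p$, $p_i$ and $k_r$. This produces an inequality of the form $\|\delta_1^*-\delta_2^*\|^2 \le C(T)\,\|\delta_1^*-\delta_2^*\|^2$ in which $C(T)\to 0$ as $T\to 0$ (or as $\beta$ grows), forcing $\delta_1^*=\delta_2^*$ under the appropriate smallness condition. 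The main obstacle is precisely this step: because of the moving boundary \eqref{eq:L} and the nonlinear coupling, closing the Gronwall estimate requires careful control of the boundary term through which $\delta$ enters the antibiotic equation, and uniqueness can in general be guaranteed only for sufficiently small final time $T$ or sufficiently large cost weight $\beta$.
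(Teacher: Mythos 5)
Your proposal is correct in outline but takes a genuinely different --- and considerably more self-contained --- route than the paper. The paper's entire proof is a five-item checklist (Lipschitz right-hand side, nonempty admissible set, closed compact convex control range $[0,1]$, convexity of the integrand in the control via the $\tfrac{\beta}{2}\delta^2$ term, and linearity of the control in the Dirichlet datum $A(L,t)=A_0\,\delta(t)$) followed by a single citation of Corollary 4.1 of Fleming and Rishel \cite{OC5}, which is invoked to deliver existence and uniqueness in one stroke. Your direct-method argument (weak sequential compactness of the admissible set, a priori state bounds from the volume constraint \eqref{eq:sum_bi} and maximum principles, Aubin--Lions compactness after mapping the moving domain to a fixed interval, weak lower semicontinuity of the quadratic term) essentially reconstructs by hand, in the PDE setting, what that citation encapsulates --- and this is arguably the more appropriate treatment, since the Fleming--Rishel corollary is formulated for ODE-constrained problems and its transfer to a moving-boundary mixed parabolic--hyperbolic system is exactly the nontrivial step the paper leaves implicit. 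Your adjoint/PMP characterization is consistent with the paper's later development (their optimality condition \eqref{eq:our_opt_cond} is precisely your projection formula with $\Phi(t)$ given by the boundary coupling $-A_0\,\partial p_5(L,t)/\partial x$), though the paper keeps that outside the proof of this theorem. On uniqueness you are in fact more careful than the paper: as you observe, convexity of the integrand in $\delta$ does not make the reduced functional strictly convex, because the control-to-state map is nonlinear, so your Gronwall/contraction argument --- yielding uniqueness only for sufficiently small $T$ or sufficiently large $\beta$, in the style of the Fister--Lenhart-type arguments the paper itself cites \cite{OCG1,OCG2} --- is the standard honest way to close the gap, whereas the paper asserts unconditional uniqueness from its checklist, a claim its own proof does not actually substantiate. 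What the paper's approach buys is brevity; what yours buys is a proof that genuinely covers the system at hand and makes the (conditional) scope of uniqueness explicit.
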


\begin{proof}
Let $\mathbf{r}(x,t,y,z)$
 be the $RHS$ of the system. We note the following axioms about our model and control function $\delta(t)$.
\begin{enumerate}
    \item $r(t,x,u)$ satisfies the Lipschitz property.
    \item Since at least the constant dose is admissible, hence the set of all acceptable control functions are non-empty.
    \item The control function's range $[0,1]$ is closed, compact and convex.
    \item The integrand used in equation~\eqref{eq:cost_func} is convex w.r.t the control function (the term $\beta \; \delta^2 (t)$).
    \item The dirichlet boundary condition $A(L,t) = u(t)= A_0 \; \delta(t)$ is linear in control function.
\end{enumerate}

  Since, our system admits the above five properties hence the corollary 4.1 of \cite{OC5} can be applied and for our situation, we determine that there is a single optimal control function $\delta(t)$ that minimizes the objective cost functional. 
\end{proof}
 Thus, the aforementioned theorem ensures that there is a special treatment plan that reduces\eqref{eq:cost_func}. To find this optimal function $\delta(t)$, we will use \textbf{Pontryagin's maximum principle}. Note that our system has parabolic pdes, as well as first order hyperbolic pdes. So, first, we give the basic theory from \cite{OC6}, \cite{OC7} that assures how PMP can be applied to our model, based on which we will procede for our model.

 \subsection{PMP for Mixed Parabolic–Hyperbolic Optimal Control Problem}

We discuss the PMP by considering  the system of two PDES, one parabolic and the other first order hyperbolic, where the control function appears as Dirichlet boundary condition on $y$. Later, we will extend this to our parabolic-heperbolic type model.  Consider the following system in the space-time domain 
\(
Q := (0,T) \times (0,L):
\)

\begin{equation}
\begin{cases}
y_t + A y + f_1(x,t,y,z) = 0, & \text{in } Q, \quad \text{(parabolic)}\\[2mm]
z_t + B z + f_2(x,t,y,z) = 0, & \text{in } Q, \quad \text{(hyperbolic)}
\end{cases}
\label{eq:general_system}
\end{equation}

with boundary and initial conditions:

\[
\begin{aligned}
y &= u \quad \text{on } \Sigma_y,\\
z &= 0 \quad \text{on } \Sigma_z,\\
y(0,x) &= y_1, \quad z(0,x) = y_2, \quad x \in (0,L),
\end{aligned}
\]

Here:

\begin{itemize}
\item $A$ is a second-order elliptic operator (parabolic operator), e.g.,
    \[
    A y = - \sum_{i,j=1}^N \frac{\partial}{\partial x_i} \left(a_{ij}(x) \frac{\partial y}{\partial x_j}\right),
    \]
    \item $B$ is a first-order hyperbolic operator, e.g.,
    \[
    B z = d(x) \frac{\partial z}{\partial x} + e(x) z,
    \]

    \item $u \in \mathcal{U}_{ad}$ is the control function and  $y_1, y_2$ are fixed numbers.
    \item The set of constraint on $u$ is defined by
    \[
    \mathcal{U}_{ad} = \left\{\, u \in L^{1}(S)\; \big|\; u(s,t) \in K_U(s,t) \subset U\ \text{for a.e. } (s,t) \in S \,\right\},
    \]
where $K_U(\cdot)$ is measurable multifunction with nonempty and closed values in $\mathcal{P}(\mathbb{R})$, and $U$ is a compact subset of $\mathbb{R}$.

\end{itemize}

The optimal control problem $(P)$ is to minimize the general cost functional:

\begin{equation}
J(y,z,u) = \int_Q F(x,t,y,z) \, dx \, dt 
+ \int_{\Sigma_y} G_y(s,t,u) \, ds \, dt 
+ \int_0^L L_0(x,y(T,x),z(T,x)) \, dx.
\end{equation}
subject to the PDE system above.

Define the Hamiltonian densities for the problem~\eqref{eq:general_system}:
\[
\begin{aligned}
\mathcal{H}_Q(x,t,y,z,p,q) &= F(x,t,y,z) + p\, f_1(x,t,y,z) + q\, f_2(x,t,y,z),\\
\mathcal{H}_{\Sigma_y}(s,t,u,p) &= G_y(s,t,u) + p\, u,\\
\mathcal{H}_0(x,y(T),z(T),p,q) &= L_0(x,y(T),z(T)) - p\, y_1 - q\, y_2.
\end{aligned}
\]
where $p$ and $q$ are adjoint variables for the system~\eqref{eq:general_system}.

\begin{theorem} Let $(\bar y, \bar z, \bar u)$ be an optimal solution. Then there exist adjoint variables $p,q$ that satisfy the pdes:

\begin{equation}
\begin{cases}
- p_t + A p + f_{1y} p + f_{2y} q = F_y(x,t,\bar y, \bar z), & p(T,x) = L_{0y}(x),\\[1mm]
- q_t + B q + f_{1z} p + f_{2z} q = F_z(x,t,\bar y, \bar z), & q(T,x) = L_{0z}(x),
\end{cases}
\label{eq:general_adjoint}
\end{equation}

and the optimal boundary control satisfies the minimum principle:

\begin{equation}
\mathcal{H}
_{\Sigma_y}(s,t,\bar u,\frac{\partial p (s, t)}{\partial n_A}) = \min_{u \in U_{ad}} \mathcal{H}
_{\Sigma_y}(s,t,u,\frac{\partial p (s, t)}{\partial n_A}), 
\quad \text{for almost all } (s,t) \in \Sigma_y.
\end{equation}
\end{theorem}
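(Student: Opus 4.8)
The plan is to derive these necessary conditions by the Lagrangian (formal adjoint) method, reading the optimality system as the stationarity of an augmented functional together with a variational inequality in the control. First I would adjoin the two state equations of \eqref{eq:general_system} to the cost functional $J(y,z,u)$ via the multipliers $p$ (for the parabolic equation) and $q$ (for the hyperbolic equation), forming
\[
\mathcal{L}(y,z,u,p,q) = J(y,z,u) + \int_Q p\,(y_t + Ay + f_1)\,dx\,dt + \int_Q q\,(z_t + Bz + f_2)\,dx\,dt .
\]
Since $(\bar y,\bar z,\bar u)$ is optimal and the state is pinned to $u$ through \eqref{eq:general_system}, the first variation of $\mathcal{L}$ in the state directions must vanish, while its variation along any admissible control direction must be nonnegative.

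Next I would compute the Gâteaux derivative of $\mathcal{L}$ along perturbations $y\to\bar y+\varepsilon\,\delta y$ and $z\to\bar z+\varepsilon\,\delta z$, then integrate by parts to move every derivative off the state variations and onto the multipliers. Integration in time produces the terminal contributions at $t=T$ (the $t=0$ terms drop since $\delta y(0,\cdot)=\delta z(0,\cdot)=0$ for fixed initial data); Green's formula for the elliptic operator $A$ and integration by parts for the first-order operator $B$ supply the spatial boundary terms. Collecting the interior coefficients of $\delta y$ and $\delta z$ and setting them to zero yields exactly the adjoint system \eqref{eq:general_adjoint}, the terminal data $p(T,x)=L_{0y}(x)$ and $q(T,x)=L_{0z}(x)$ emerge from the $t=T$ boundary terms, and imposing the homogeneous adjoint condition $p=0$ on the portion of $\partial Q$ where $y$ is prescribed annihilates the unwanted interior-boundary contributions.

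Once the adjoint equations hold, the only surviving piece of the first variation is supported on $\Sigma_y$. There Green's formula for $A$ leaves the conormal derivative $\partial p/\partial n_A$, which pairs with the trace $\delta y=\delta u$; together with $\partial G_y/\partial u$ this reassembles the boundary Hamiltonian density $\mathcal{H}_{\Sigma_y}(s,t,u,\partial p/\partial n_A)$. Optimality of $\bar u$ then takes the form of the variational inequality $\int_{\Sigma_y}\big(\partial_u \mathcal{H}_{\Sigma_y}\big)\,(u-\bar u)\,ds\,dt \ge 0$ for every admissible $u$, and a pointwise measurable-selection argument over $K_U(s,t)$ upgrades this integrated inequality to the pointwise minimum principle claimed in the theorem.

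The hard part will be the functional-analytic justification of these formal manipulations rather than the algebra. Dirichlet boundary control of the parabolic equation forces $y$ into a low-regularity (transposition / very weak) setting, so showing that the control-to-state map is well defined and Gâteaux differentiable, and that the adjoint $p$ is regular enough for its conormal trace $\partial p/\partial n_A$ on $\Sigma_y$ to be meaningful, is delicate. The hyperbolic part compounds this: because $B$ carries no smoothing, the backward equation for $q$ must be solved along characteristics, and one must check that the coupling terms $f_{1z}p$ and $f_{2y}q$ land in the admissible spaces. The final passage from the integrated variational inequality to the pointwise minimum principle needs the measurability and closedness of $K_U(\cdot)$ recorded in the definition of $\mathcal{U}_{ad}$, together with a Lebesgue-point argument. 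Rather than rebuild this machinery, I would verify that \eqref{eq:general_system} satisfies the standing hypotheses of the abstract results of \cite{OC6,OC7} and invoke them for the rigorous statement, retaining the computation above as the explicit identification of the adjoint system and Hamiltonian for our mixed parabolic--hyperbolic structure.
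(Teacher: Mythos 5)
Your endpoint coincides with the paper's: the paper offers no proof of this theorem at all --- it is stated as the ``basic theory'' imported from \cite{OC6,OC7} (Arada--Raymond for Dirichlet boundary control of semilinear parabolic equations, Doboszczak--Mohan--Sritharan for the PMP with state constraints), so your final step of verifying the standing hypotheses and invoking those abstract results is exactly the paper's implicit argument. What you add is the formal Lagrangian derivation identifying the adjoint system and boundary Hamiltonian, which is sound as a heuristic but carries two cautions if read as a proof. First, integrating $\int_Q q\,B\,\delta z\,dx\,dt$ by parts produces the formal adjoint $B^{*}q=-(d\,q)_x+e\,q$ rather than $Bq$, and with your $+$ convention for adjoining the constraints the stationarity conditions come out as $-p_t+Ap+f_{1y}p+f_{2y}q=-F_y$ with $p(T,x)=-L_{0y}(x)$; you must adjoin with a minus sign (equivalently relabel $p\mapsto -p$, $q\mapsto -q$) to land on \eqref{eq:general_adjoint} as stated, reading the $B$ there as the transport adjoint. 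Second, and more substantively, your passage from the variational inequality $\int_{\Sigma_y}\partial_u\mathcal{H}_{\Sigma_y}\,(u-\bar u)\,ds\,dt\ge 0$ to the pointwise minimum principle presupposes that convex variations $\bar u+\varepsilon(u-\bar u)$ are admissible and that $\mathcal{H}_{\Sigma_y}$ is convex in $u$; the theorem assumes only that $K_U(s,t)$ has closed values, so in the stated generality the minimum principle is obtained in \cite{OC6} via diffuse (needle-type) perturbations rather than convex ones --- your deferral to the references is therefore doing genuine mathematical work at that step, not merely supplying regularity. With those two points understood, your plan is correct and, in its rigorous core, identical to the paper's citation-based treatment, while usefully making explicit the adjoint and Hamiltonian structure the paper leaves implicit.
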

Define \[ \mathcal{H}
(x,t,y,z,u,p,q,s) =\mathcal{H}
_Q(x,t,y,z,p,q) + \mathcal{H}
_{\Sigma_y}(s,t,u,p) + \mathcal{H}
_0(x,y(T),z(T),p,q), \]
then the optimal control function $\delta^* (t)$ can be found using the optimality condition

\begin{equation}
\frac{\partial \mathcal{H}
}{\partial \delta (t)} =0 
\label{eq:general_opt_cond}
\end{equation}

Now we apply the above theory to our model.

The cost functional for our model is 
\begin{equation}
J (B_s, B_p, B_r, L, \delta) = \int_0^t \int_0^{L(t)} (B_s+ B_p+ B_r) \, dx \, dt 
+ \int_0^t L\, dt
+ \int_0^t \frac {\beta}{2} \delta^2(t)  \, dt.
\end{equation} we used $\delta (t)$ instead of $u(t)$  as $u(t) = A_0 \; \delta (t)$; $A_0$ is constant and $\delta(t) \in [0,1]$ is to be controlled for optimal results.

The optimality condition gives us
\begin{equation}
\delta (t) = Proj_{[0,1]} \left(-\frac{A_0}{\beta} \, \frac{\partial p_5(L,t)}{\partial x} \right) =
\min\!\left( \max\!\left( 0, -\frac{A_0}{\beta} \, \frac{\partial p_5(L,t)}{\partial x}\right), 1 \right).
\label{eq:our_opt_cond}
\end{equation}
Here we took projection on $[0,1]$ as $\delta(t) \in [0,1]$.

Now we solve the system of pdes consisting of our model~\eqref{eq:Bs}-~\eqref{eq:L}, the adjoint equations and the optimality condition ~\eqref{eq:our_opt_cond} to get the profile of the control function $\delta(t)$.

\subsection{Numerical Scheme}

The coupled PDE system is solved using a mixed implicit-explicit (IMEX) finite difference scheme. The diffusion-dominated equations for $A$ and $C$ are treated implicitly (Crank--Nicolson type) for stability, while the advection-dominated equations for $B_s, B_p, B_r,B_d$ are solved explicitly using a first-order upwind scheme. The adjoint equations are discretized similarly, with the parabolic adjoint implicit and the hyperbolic adjoint explicit, backward in time. The time step is chosen to satisfy the CFL condition for the hyperbolic component, \;  
\(
\Delta t \le \frac{\Delta x}{\max |v(x)|},
\)
ensuring stability of the advection scheme. Neumann and Dirichlet boundary conditions are applied as appropriate, providing a stable and accurate solution for the mixed parabolic--hyperbolic optimal control problem. We used $\Delta t= 0.001$ and $\Delta x= 0.005$. We selected the penalty parameter $\beta$ to be $0.001$.   The numerical scheme was implemented in Python.

This mixed explicit–implicit (IMEX) approach balances stability, accuracy, and efficiency, and is a standard method for mixed hyperbolic–parabolic PDE systems. Using fully implicit for both advection and diffusion would be more computationally expensive, and implicit advection can introduce unnecessary numerical damping, so the IMEX approach is more efficient and accurate for this system.

The numerical results obtained are shown in Figure~\ref{fig:myimage7} for continuous dosing and in Figure~\ref{fig:myimage8} for periodic dosing with a poeriod of 6 hours.

\subsection{Results of Treatment through Optimal Dosing}
\begin{figure}[H]
    \centering
    \includegraphics[width=0.7\textwidth]{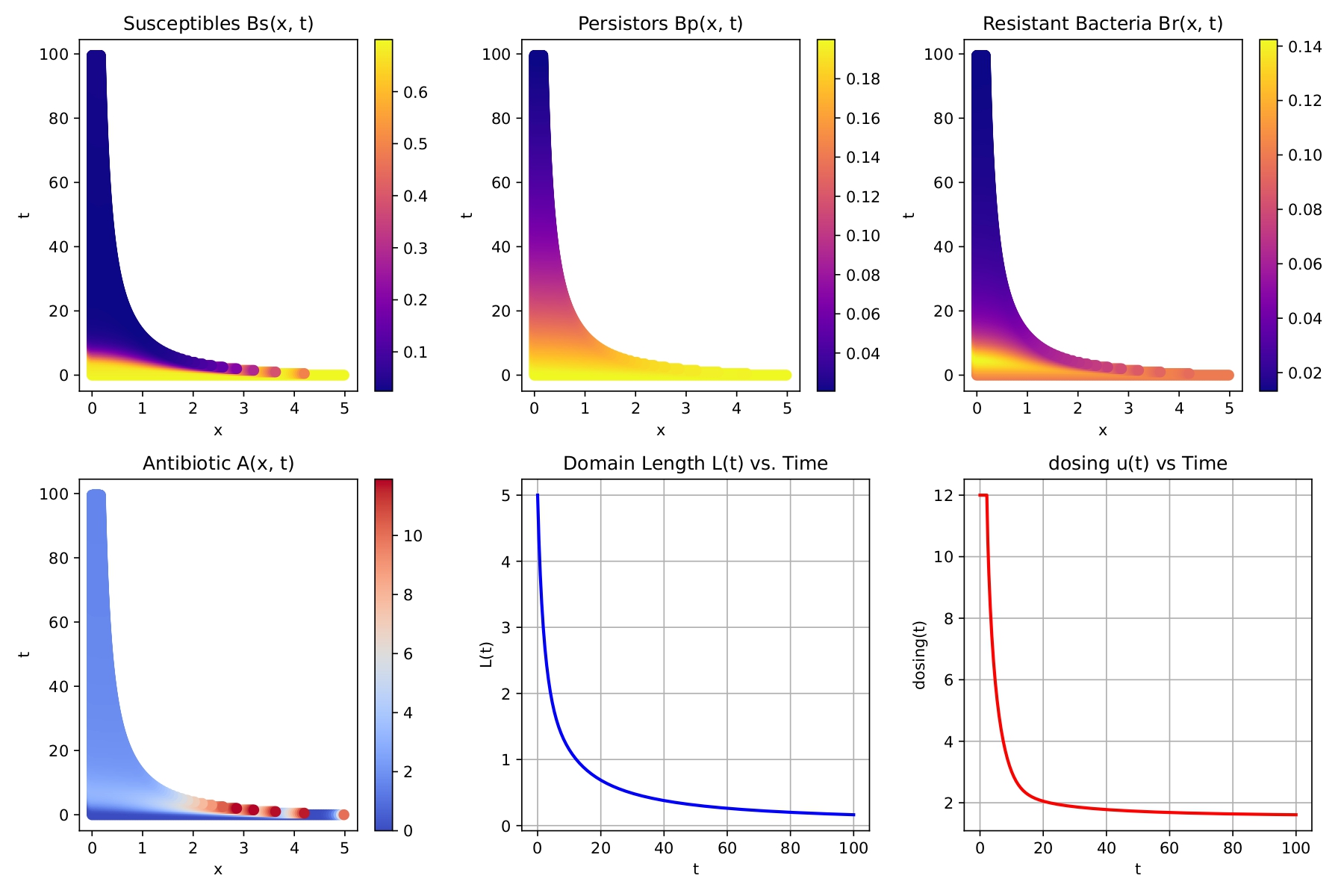} 
    \caption{Optimal Control for Continuous Dosing (Total antibiotic used: 250 units)}
    \label{fig:myimage7}
\end{figure}
\begin{figure}[H]
    \centering
    \includegraphics[width=0.7\textwidth]{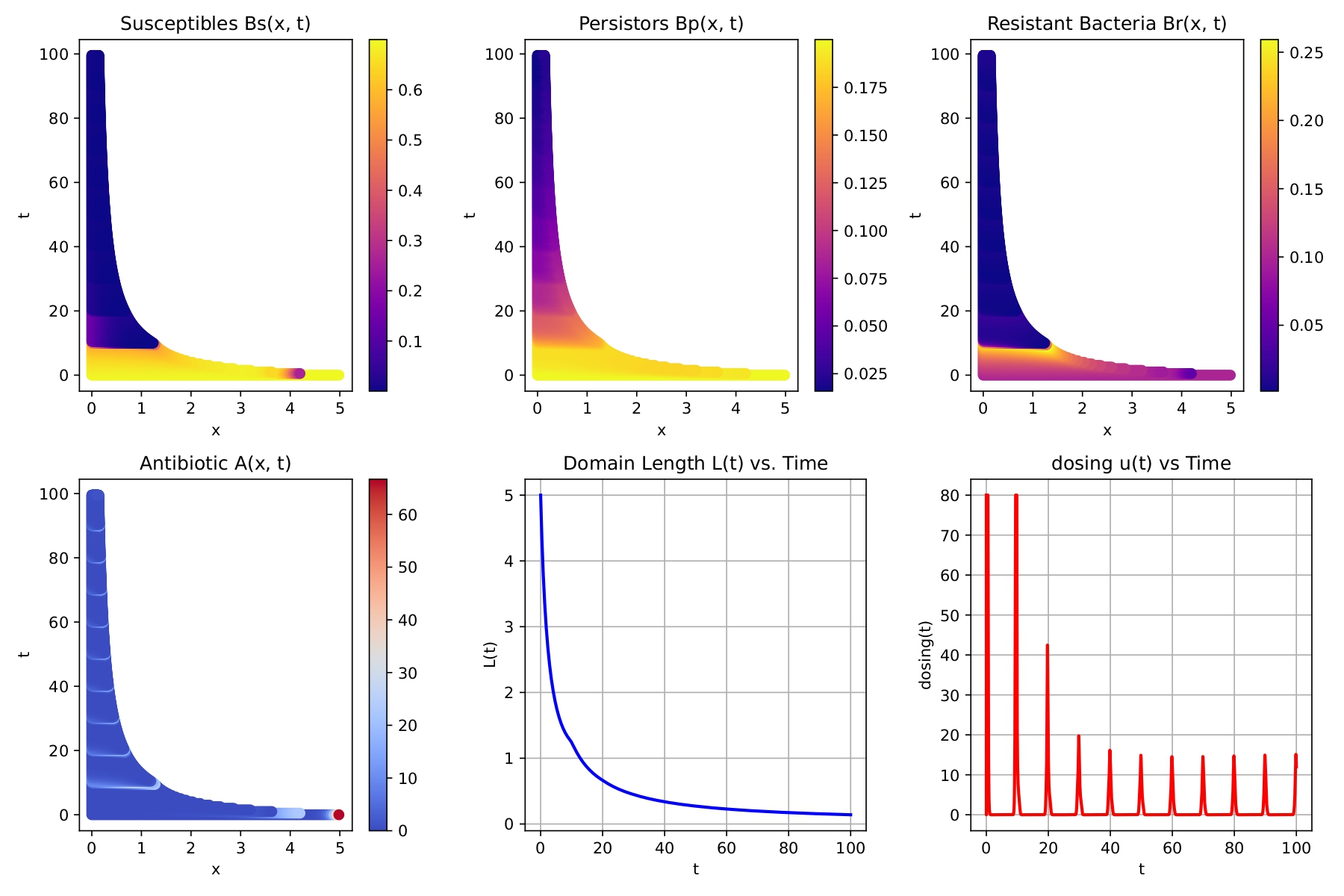} 
    \caption{Optimal Control for Periodic Dosing (Total antibiotic used: 210 units)}
    \label{fig:myimage8}
\end{figure}
We applied the Pontryagin’s Maximum Principle (PMP) to derive the optimal control function \(\delta(t)\) at the Dirichlet boundary. The optimized control profiles revealed a significant improvement in dosing efficiency. First we solve the system to get biofilm eradication with the continuous dosing by starting from an initial dose of 12 units. The control function \(\delta(t)\) gradually decreases over time and reduce the dosage over time as $u(t)=A_0 \delta(t)$. The total antibiotic usage under optimal control was recorded about 250 units, which is significantly lower as compared to 1000 units required previously without optimization.

Similarly, in the periodic dosing scenario, the optimal control yielded an effective eradication of the biofilm. 
The initial dose was applied to 80 units, while the total antibiotic used for the successful treatment was recorded to just 210 units. This is a significant improvement over the non-optimal periodic dosing, which required 640 units to get the same therapeutic result. Thus, using periodic dosing strategy under optimal control minimizes the total antibiotic usage, making it the most favorable and effective way for maintaining the complete bacterial clearance and disease treatment. These findings emphasize the need of implementing effective control measures, particularly those involving periodic dosing. The combination of temporal dose modulation and perfect control results in a physiologically realistic and resource-efficient biofilm treatment strategy.
\subsection{Dose Tapering}
We note that the optimal dosing strategy involves a dose tapering, such dosing techniques have been known to be effective in treatment with antibiotics \cite{OC1, ADNK}. Long-term use of high doses of antibiotics may have negative side effects, and quitting therapy too soon may cause the illness to recur. As a result, it has been proposed that antibiotic dosages be gradually decreased. However, one should exercise caution because the bacteria may begin to grow if the dosage is lowered below a specific threshold. We have demonstrated this using our model. We define $u(t) = A_0 \; \delta(t)$ where $\delta(t)$ determine the peak (amount of dose) over time. The disease may not be successfully eradicated by a given dosage, but it may reappear if the dosage is lowered below a particular threshold.
Therefore, the dosage should be suitably reduced to ensure that the bacteria is totally eradicated.

\subsection{Effects of \texorpdfstring{$k_d^r$}{kdr} and \texorpdfstring{$\beta$}{beta} on Optimal Control Strategy} 
We discuss the influence of varying the killing rate of resistant bacteria $k_d^r$ and the cost sensitivity parameter of antibiotic dosing $\beta$. We note from the Figure~\ref{fig:1Dimage5} that for low killing rate, we need more dose for the treatment and vise versa, which is consistent with trends noted in clinical practice.

\begin{figure}[H]
    \centering
    \includegraphics[width=0.7\textwidth]{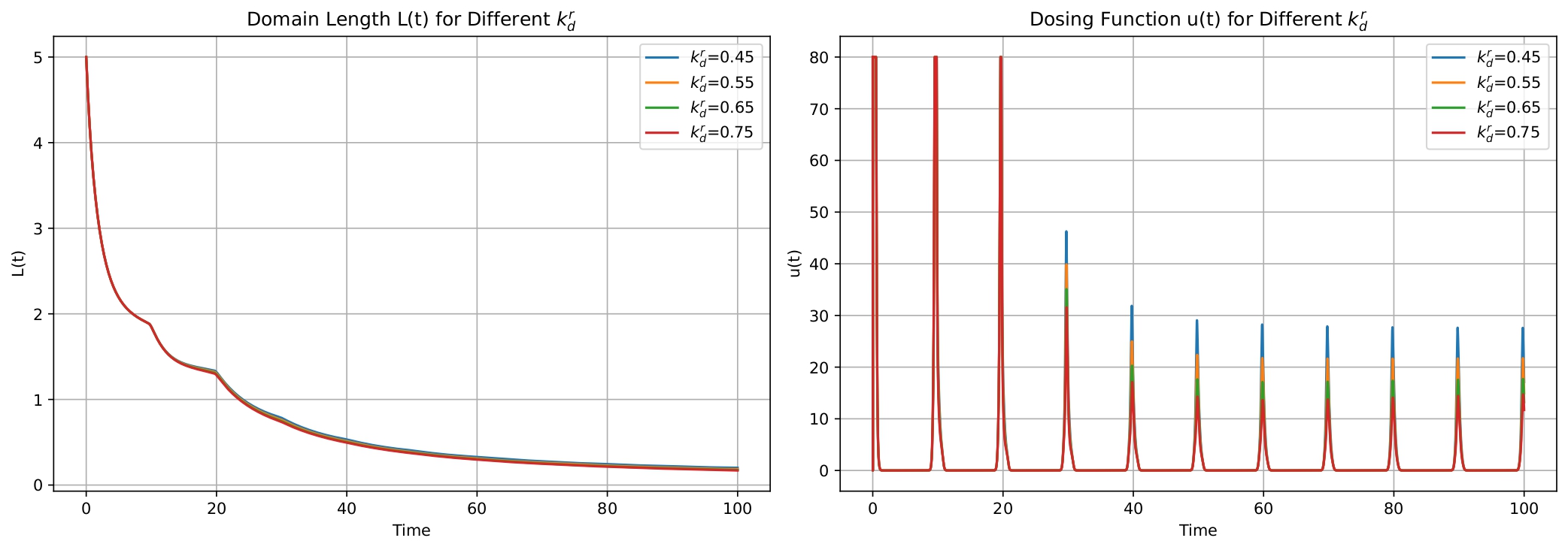} 
    \caption{Effect of varying $k_d^r$, at $k_d=1$}
    \label{fig:1Dimage5}
\end{figure}

\begin{figure}[H]
    \centering
    \includegraphics[width=0.7\textwidth]{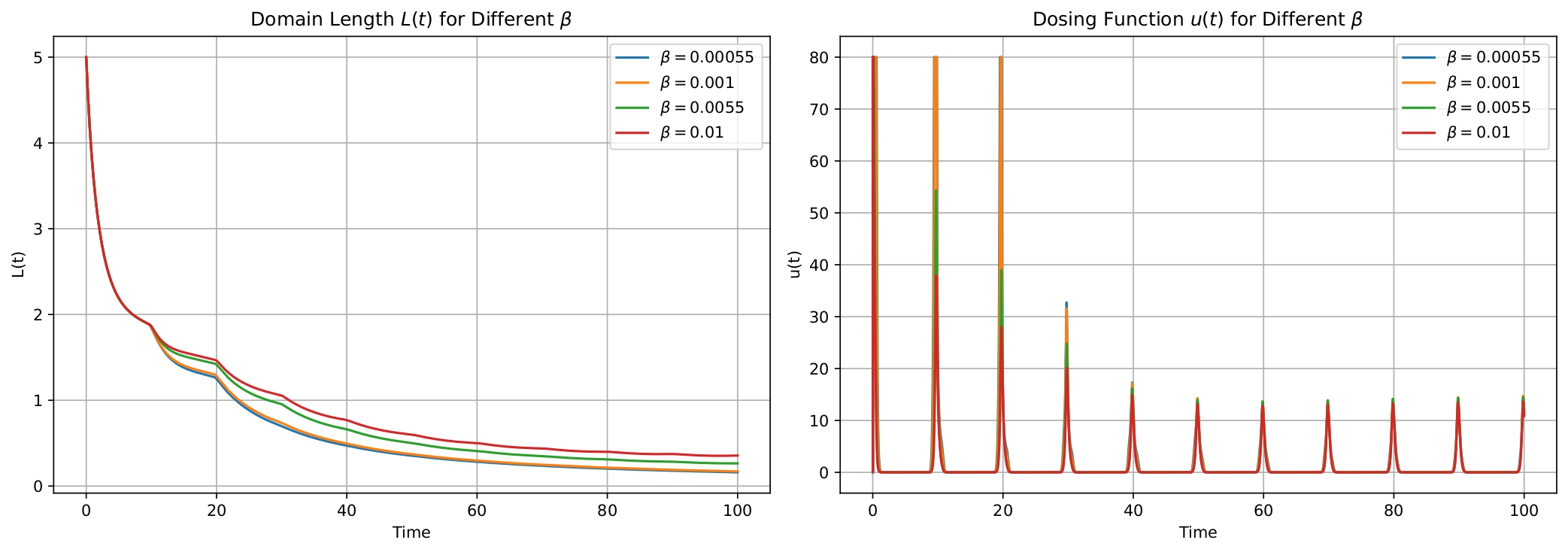} 
    \caption{Effect of varying cost sensitivity parameter $\beta$}
    \label{fig:1Dimage6}
\end{figure}
In~Figure~\ref{fig:1Dimage6}, we examine the effect of varying the cost sensitivity parameter on antibiotic dose and bacteria for various values of $\beta= 0.00055, \; 0.001, \; 0.0055, \; 0.01.$ 
They show a trade-off between antibiotic use and bacterial decrease. For lower values of $\beta$, the optimal control suggests larger antibiotic doses due to which there is a significant reduction in the bacterial population and a remarkable shrinkage of the biofilm, both bacteria and biofilm length approaching $0$.
But, as $\beta$ increases, the optimal control setup requires the reduction of the antibiotic applied. In all the cases, Pontryagin's principle permits the dose tapering  pattern, for which the dose gradually reduces over the dosing period. This behavior contrasts with the non-optimal treatment strategies discussed in section~\ref{sec:numerical_results}.

\section{Conclusion}
\label{sec:conclusions}
In this study, we modeled the transfer of resistance from resistant to susceptible bacteria via HGT within a 1D biofilm and studied different antibiotic dosing strategies. We proposed a mathematical model that included susceptible, persistor, and resistant bacteria dynamics along with the dynamics of the nutrient and antibiotics. The length of the biofilm was considered to depend on the bacterial population, and the bacteria may be eradicated by the use of antibiotics in which case the biofilm length goes to zero. \\
We determined the steady states of the model and their stability.
We found three biologically possible and feasible steady states of our system, one trivial $(0,0,0)$ and other 2 non-trivial states $(b_s^e, b_s^p, 0)$,  $(b_s^a, b_s^a, b_r^a)$. We derived analytic expressions for the sterile and co-existence states and verified our results numerically by using an appropriate finite difference scheme. In terms of model parameters, we get trivial states for $m_1<m_1^*$ and $m_2<q$, while non-trivial states when one of the conditions $m_1>m_1^*$ and $m_2>q$ holds. Biologically, the non-trivial steady states are obtained when the antibiotic is insufficient to kill the bacteria, while trivial state is obtained when antibiotic is increased beyond some specific value.\\
We observed the three steady states in the presence of antibiotics, both for continuous and periodic dosing. For continuous dosing, we started with a dose of 1 unit, and got treatment failure with steady state $(b_s^a, b_s^a, b_r^a)$. When we increased the dose to $2.5$ units, we got other non trivial steady state $(b_s^e, b_s^e, 0)$. Finally when we increased the dose to $10$ units, we got biofilm eradication and hence the trivial steady state $(0,0,0)$. 
Similarly we got non trivial steady states $(b_s^a, b_s^a, b_r^a)$ and $(b_s^e, b_s^e, 0)$ for periodic dosing of 5 and 8 units respectively with a dosing period of 6 hours, and the trivial steady state occurs by increasing the periodic dose to 40 units with the same period. 
As expected periodic dosing is better than continuous dosing, a possible reason being that it allows persistor bacteria to transform to susceptible form during off-dosing intervals, making them more vulnerable to antibiotics. The total amount of antibiotic to treat the biofilm is also reduced as susceptible bacteria are eradicated more efficiently by antibiotics as compared to persistor and resistant bacteria. \\
The antibiotic dosage has to be determined very carefully, less antibiotic may lead to the treatment failure or temporary success for some time followed by bacterial growth, and high doses beyond what is needed may create other health problems as antibiotics may have side effects. Theorem~\ref{th:theorem3} gives the ranges according to which if dose is in the range given by~\eqref{eq:insuf_dose}, we will get non trivial steady state and thus treatment failure. While for getting trivial steady state and treatment success we should increase the dose to a minimum threshold given in~\eqref{eq:suf_dose}.\\  
As mentioned determining the optimal antibiotic dosing is very important, and for this we employed an extended version of Pontryagin’s maximum principle (PMP), that minimizes the antibiotic dose, and also ensures the eradication of the biofilm. 
We employed PMP for both continuous as well as periodic dosing regimens. 
We defined our control function as $u(t)= A_0 \delta(t)$ and calculate $\delta(t)$ using PMP, where $\delta(t) \in [0,1]$ controls the antibiotic required at any instant of time. We conclude that a tapered dosing is optimal, ensures the eradication of biofilm while keeping the total antibiotic quantity low. We determine this for a variety of scenarios and observe that the qualitative behavior of the optimal dosing remains the same across parameter values.\\
As an extension of the current work we would like to consider a 2-D biofilm model, and study resistance transfer and efficient dosing strategies. Another direction would be to consider the fluid compartment as well, relaxing the saturation assumption we have used here. Finally, we would also like to explore HGT using transformation and transduction in biofilms and study antibiotic dosing in those cases.

\end{document}